\newtheorem{definition}{Definition}
\newtheorem{remark}{Remark}
\newtheorem{lemma}{Lemma}
\newtheorem{example}{Example}
\newtheorem{theorem}{Theorem}
\newtheorem{corollary}{Corollary}
\newcommand{\polylog}{\mathrm{PolyLogTime}}
\newcommand{\npolylog}{\mathrm{NPolyLogTime}}
\begin{document}

\title{A Restricted Second-Order Logic for Non-deterministic Poly-Logarithmic Time\footnote{Draft of Paper submitted to the Logic Journal of the IGPL.}}

\author{Flavio Ferrarotti\\
Software Competence Center Hagenberg, Austria \\
Email: flavio.ferrarotti@scch.at\\
\and
Senen Gonz\'{a}les\\
P\&T Connected, Hagenberg, Austria \\
Email: ulcango@gmail.com\\
\and
Klaus-Dieter Schewe\\
Zhejiang University, UIUC Institute, Haining, China\\
Email: kd.schewe@intl.zju.edu.cn\\
\and
Jos\'{e} Mar\'{i}a Turull-Torres\\
Universidad Nacional de La Matanza, Buenos Aires, Argentina\\
Email: jturull@unlam.edu.ar}

\date{February 2019}


\maketitle

\begin{abstract}
We introduce a restricted second-order logic $\mathrm{SO}^{\mathit{plog}}$ for finite structures where second-order quantification ranges over relations of size at most poly-logarithmic in the size of the structure. 
We demonstrate the relevance of this logic and complexity class by several problems in database theory.
We then prove a Fagin's style theorem showing that the Boolean queries which can be expressed in the existential fragment of $\mathrm{SO}^{\mathit{plog}}$ corresponds exactly to the class of decision problems that can be computed by a non-deterministic Turing machine with random access to the input in time $O((\log n)^k)$ for some $k \ge 0$, i.e., to the class of problems computable in non-deterministic poly-logarithmic time. It should be noted that unlike Fagin's theorem which proves that the existential fragment of second-order logic captures NP over arbitrary finite structures, our result only holds over ordered finite structures, since $\mathrm{SO}^{\mathit{plog}}$ is too weak as to define a total order of the domain. Nevertheless $\mathrm{SO}^{\mathit{plog}}$ provides natural levels of expressibility within poly-logarithmic space in a way which is closely related to how second-order logic provides natural levels of expressibility within polynomial space. Indeed, we show an exact correspondence between the quantifier prefix classes of $\mathrm{SO}^{\mathit{plog}}$ and the levels of the non-deterministic poly-logarithmic time hierarchy, analogous to the correspondence between the quantifier prefix classes of second-order logic and the polynomial-time hierarchy. Our work closely relates to the constant depth quasipolynomial size AND/OR circuits and corresponding restricted second-order logic defined by David A. Mix Barrington in 1992. We explore this relationship in detail. 
\\[1ex]

\noindent
{\normalsize\textbf{Acknowledgements.}} The research reported in this paper results from the project {\em Higher-Order Logics and Structures} supported by the Austrian Science Fund (FWF: \textbf{[I2420-N31]}).
It was further supported by the Austrian Research Promotion Agency (FFG) through the COMET funding for the Software Competence Center Hagenberg. 
\end{abstract}

\section{Introduction}

During the last forty years logics over finite structures have become a central pillar for studying the definability and complexity of computational problems. The focus is on understanding how the expressive power of logics over finite structures, or equivalently query languages over relational databases, relate to natural classes of computational complexity. The foundational result in this line of work is Fagin's famous theorem~\cite{fagin:1973} which states that the existential fragment $\mathrm{SO}\exists$ of second-order logic over finite relational structures captures all decision problems that are accepted by a non-deterministic Turing machine in polynomial time --in other words: $\mathrm{SO}\exists$ captures the complexity class $\mathrm{NP}$. This was extended by Stockmeyer~\cite{Stockmeyer76} to an exact correspondence between the quantifier prefix classes of second-order logic and the levels of the polynomial-time hierarchy. Since then, most Turing complexity classes have been characterized in terms of the expressive power of logic languages (see e.g. the monographs by Immerman \cite{Immerman99} and Libkin \cite{Libkin04} or the collection \cite{graedel:eatcs2007}). 

An advantage of second-order logic is that it provides a natural and high level of expressive power. A simple example which illustrates this point is provided by the set-containment join. A query that asks whether a patient has all the symptoms associated to a given disease, can be written literally provided the query language has the appropriate second-order constructs. By contrast, in first-order logic we cannot write this literally. We would need to say instead that for all symptom $s$, if the patient $p$ has $s$, then $s$ is also a symptom of the disease. Unfortunately, high expressiveness of second-order logic also yields a high complexity of evaluation of formulae as shown by Fagin-Stockmeyer theorems, which in principle make them not suitable for practical purposes. 
Nevertheless, second-order logic has been used in applied areas such as Knowledge Representation~\cite{BruynoogheB0CPJ15}. In that area it is usually known as ``model expansion for first-order logic'' and SAT solving is used to find the existentially quantified relations. On the other hand, the SAT solvers are usually ``helped'' by adding explicit syntax for fixed-points, both least and greatest. Also nested fixed-points (simultaneous induction) have been used for this purpose~\cite{HouCD10}.

Aiming at a better understanding of which features of second-order logic have a real impact on its expressive power and complexity, several semantic and syntactic restrictions have been considered in the literature. Among the syntactic restrictions, the results in \cite{Gradel92} should be highlighted. The logics SO-Horn and SO-Krom obtained by restricting the second-order logic to Horn and Krom formulae, respectively, both collapse to their respective existential fragments. Moreover, in finite structures that include the successor relation, they provide characterizations of deterministic polynomial-time and nondeterministic logspace, respectively. Also
the tractability/intractability frontier of the model checking problem for prefix classes of
existential second-order logic has been completely delineated (see~\cite{EiterGS01,EiterGG00,GottlobKS04}).

Regarding semantic restrictions of second-order logic, the logic $\mathrm{SO}^\omega$ introduced by A. Dawar
in~\cite{Dawar98} and the related logic $\mathrm{SO}^F$ introduced in~\cite{GrossoT10} are the source of inspiration for this paper. Both logics restrict the interpretation of second-order quantifiers to relations closed under equivalence of types of the tuples in the given relational structure. In the case of $\mathrm{SO}^\omega$, the second-order quantification is restricted to relations closed under equivalence of $\mathrm{FO}^k$-types of the tuples, where $\mathrm{FO}^k $ is the restriction of first-order logic to formulae with at most $k$ different variables. In $\mathrm{SO}^F$ the quantification is restricted to relations closed under equivalence of first-order
types of the tuples, i.e., under isomorphic types. It was proven in~\cite{Dawar98}, among other results, that the expressive power of the existential fragment of $\mathrm{SO}^\omega$ is equivalent to the expressive power of the nondeterministic inflationary fixed-point logic, and thus that $\mathrm{SO}^\omega$ is contained within the infinitary logic with finitely many variables ${\cal L}^\omega_{\infty, \omega}$. As shown in~\cite{GrossoT10}, $\mathrm{SO}^F$ is strictly more expressive than $\mathrm{SO}^{\omega}$. In the absence of linear order many natural NP-complete problems such as Hamiltonicity and clique are not expressible $\mathrm{SO}^{\omega}$ since they are already not expressible in ${\cal L}^\omega_{\infty, \omega}$ (see~\cite{Immerman99} among other sources). On the other hand it is easy to
see that there are NP-complete problems that can be expressed in the existential fragment of
$\mathrm{SO}^\omega$, since this logic captures NP on ordered structures. 
Through the study of different semantic restrictions over binary NP, i.e., existential second-order logic with second-order quantification restricted to binary relations, many interesting results regarding the properties of the class of problems expressible in this logic were established~\cite{DurandLS98}. Semantic restrictions were based mainly on second-order quantification restricted to unary functions, order relations and graphs with degree bounds. Based on these restrictions they were able to prove the existence of a strict hierarchy of binary NP problems.
Another relevant example of a semantic restriction over existential second-order logic can
be found in~\cite{LautemannST94}. It was shown in that work that context-free languages coincide with the
class of those sets of strings that can be defined, on word models, by existential second-order
sentences in which the second-order quantifiers range over a restricted class of binary relations
called matchings.

So the question comes up: Are there additional semantic restrictions of second-order logic that can result in elegant descriptive characterizations of meaningful computational complexity classes? It turns out that the approach of simply restricting second-order quantification to range over relations of size at most poly-logarithmic in the size of the structure, already leads to a positive answer. Indeed, using this approach we define a restricted second-order logic, namely $\mathrm{SO}^{\mathit{plog}}$, and prove a Fagin's style theorem showing that Boolean queries which can be expressed in the existential fragment of $\mathrm{SO}^{\mathit{plog}}$ corresponds exactly to the class of decision problems that can be computed by a non-deterministic Turing machine with random access to the input in time $O((\log n)^k)$ for some $k \ge 0$, i.e., to the class of problems computable in non-deterministic poly-logarithmic time ($\npolylog$ for short). It should be noted that unlike Fagin's theorem which proves that the existential fragment of second-order logic captures NP over arbitrary finite structures, our result only holds over ordered finite structures, since $\mathrm{SO}^{\mathit{plog}}$ is too weak as to define a total order of the domain. Nevertheless $\mathrm{SO}^{\mathit{plog}}$ provides natural levels of expressibility within poly-logarithmic space in a way which is closely related to how second-order logic provides natural levels of expressibility within polynomial space. In fact, we show an exact correspondence between the expressive power of the quantifier prefix classes of $\mathrm{SO}^{\mathit{plog}}$ and the levels of the non-deterministic poly-logarithmic time hierarchy (polylog-time hierarchy from now on), analogous to the correspondence between the quantifier prefix classes of second-order logic and the polynomial-time hierarchy.

This is up to our knowledge the first descriptive characterization of $\npolylog$ and each subsequent level of the polylog-time hierarchy. An anonymous referee of the preliminary conference version of the current paper~\cite{FerrarottiGST08}, pointed us however to a very relevant antecedent in the work of David A. Mix Barrington in~\cite{Barr92}, where a semantically restricted second-order logic (let us denote it as $\mathrm{SO}^b$) related to our logic $\mathrm{SO}^{\mathit{plog}}$, is used to characterize a class of families of constant depth quasipolynomial size AND/OR circuits $\mathit{qAC}^0$. In particular it is shown there that the class of Boolean queries computable by $\mathrm{DTIME}[(\log n)^{{O}(1)}]$ $\mathrm{DCL}$-uniform families of Boolean circuits of unbounded fan-in, size $2^{({\log n})^{{O}(1)}}$ and depth ${O}(1)$, coincides with the class of Boolean queries expressible in $\mathrm{SO}^{b}$. While this would imply that $\mathrm{SO}^{b}$ also captures the whole polylog-time hierarchy (see Section~\ref{barrington} for a detailed explanation), in the case of $\mathrm{SO}^{\mathit{plog}}$ this is an easy corollary of the one-to-one correspondence between its quantifier prefix classes and the levels of the polylog-time hierarchy. As we show in Section~\ref{barrington},  this correspondence is very unlikely to hold for the quantifier prefix classes of $\mathrm{SO}^b$. It is also very unlikely that the existential fragment of $\mathrm{SO}^b$ can provide a descriptive characterization of $\npolylog$, as it appears to be too powerful for that.

We further believe that the natural levels of expressive power provided by $\mathrm{SO}^{\mathit{plog}}$ are not matched by $\mathrm{SO}^b$. In this sense, we give examples of natural queries expressible in $\mathrm{SO}^{\mathit{plog}}$, such as the classes $\mathrm{DNFSAT}$ of satisfiable propositional formulas in disjunctive normal form and $\mathrm{CNFTAUT}$ of propositional tautologies in conjunctive normal form, both defined in as early as 1971 (\cite{Cook_71}). The definition of such queries in $\mathrm{SO}^{\mathit{plog}}$ can be done by means of relatively simple and elegant formulae, despite a restriction we need to impose in the universal first-order quantification. This is not fortuitous, but the consequence of the fact that in the definition of $\mathrm{SO}^{\mathit{plog}}$ we use a more relaxed notion of second-order quantification than that used in the definition of $\mathrm{SO}^b$. Indeed, the second-order quantifiers in $\mathrm{SO}^{\mathit{plog}}$ range over arbitrary relations of polylog size on the number of elements of the domain, not just over relations defined on the set formed by the first $\log n$ elements of that domain as in $\mathrm{SO}^b$. The descriptive complexity of $\mathrm{SO}^{\mathit{plog}}$ is not increased by this more liberal definition of polylog restricted second-order quantifiers.

We reach our results by following an inductive itinerary. After presenting some short but necessary preliminaries in Section \ref{sec:preliminaries}, we introduce the logic $\mathrm{SO}^{\mathit{plog}}$ in Section~\ref{sec:soplog}. We do this in a comprehensive way, giving examples of problems expressible in $\mathrm{SO}^{\mathit{plog}}$. The fragments $\Sigma^{\mathit{plog}}_m$ and $\Pi^{\mathit{plog}}_m$ of formulae in quantifier prenex normal form are defined using the classical approach in second-order logic, showing that every $\mathrm{SO}^{\mathit{plog}}$ formula can be written in this normal form. This forms the basis for the definition of the hierarchy inside $\mathrm{SO}^{\mathit{plog}}$. 

Section~\ref{sec:arithexampls} shows how the first level $\Sigma^{\mathit{plog}}_1$ of the hierarchy of quantifier prenex formulae of $\mathrm{SO}^{\mathit{plog}}$ can already define the (poly-logarithmically) bounded binary arithmetics necessary to prove our main result, i.e., to prove that the existential fragment of $\mathrm{SO}^{\mathit{plog}}$ captures $\npolylog$. We should stress that it is \emph{not} immediately obvious that these operations can be expressed in $\Sigma^{\mathit{plog}}_1$, since this logic cannot express all existential second-order properties over relations of polylogarithmic size due to its restricted universal first-order quantification.

In Section~\ref{sec:plh} we concentrate on complexity classes inside polylogarithmic space. Analogous to the polynomial time hierarchy inside polynomial space we define a polylog-time hierarchy $\mathrm{PLH}$, where $\tilde{\Sigma}_1^{\mathit{plog}}$ is defined by $\npolylog$ capturing all decision problems that can be accepted by a non-deterministic Turing machine in time $O((\log n)^k)$ for some $k \ge 0$, where $n$ is the size of the input. In order to be able to deal with the sublinear time constraint we assume random access to the input following the same approach than in~\cite{barrington:jcss1990}. Higher complexity classes $\tilde{\Sigma}_m^{\mathit{plog}}$ (and $\tilde{\Pi}_m^{\mathit{plog}}$) in the hierarchy are defined in a similar way using alternating Turing machines with a bound $m$ on the alternations.

Section \ref{sec:main} contains our main results. First we give a detailed, constructive proof of the fact that the existential fragment of $\mathrm{SO}^{\mathit{plog}}$, i.e. $\Sigma^{\mathit{plog}}_1$, captures the complexity class $\npolylog$. After that, we follow the inductive path and establish the expressive power of the fragments $\Sigma^{\mathit{plog}}_m$ and $\Pi^{\mathit{plog}}_m$, for every $m \geq 1$, proving that each layer is characterized by a random-access alternating Turing machine with polylog time and $m$ alternations. The fact that $\mathrm{PLH} = \mathrm{SO}^{\mathit{plog}}$ follows as a simple corollary.

The way in which the restricted second-order logic and corresponding class of families of circuits $\mathit{qAC}^0$ studied by David A. Mix Barrington in~\cite{Barr92} relates to our work is investigated formally in Section~\ref{barrington}. We conclude the paper with a brief summary and outlook in Section \ref{sec:schluss}.

\section{Preliminaries}\label{sec:preliminaries}

Unless otherwise stated, we work with ordered finite structures and assume that all vocabularies include the relation and constant symbols: $\leq$, $\mathrm{SUCC}$, $\mathrm{BIT}$, $0$, $1$, $\mathit{logn}$ and $\mathit{max}$. In every structure $\bf A$, $\leq$ is interpreted as a total ordering of the domain $A$ and $\mathrm{SUCC}$ is interpreted by the successor relation corresponding to the $\leq^{\bf A}$ ordering. The constant symbols $0$, $1$ and $\mathit{max}$ are in turn interpreted as the minimum, second and maximum elements under the $\leq^{\bf A}$ ordering and the constant $\mathit{logn}$ as $\left\lceil \log_2 |A| \right\rceil$. By passing to an isomorphic copy, we assume that $A$ is the set $\{0, 1, \ldots, n-1\}$ of natural numbers less than $n$, where $n$ is the cardinality $|A|$ of $A$. Then $\mathrm{BIT}$ is interpreted by the following binary relation:
\[\mathrm{BIT}^{\bf A} = \{(i, j) \in A^2 \mid \text{Bit $j$ in the binary representation of $i$ is $1$}\}.\]
In this paper, $\log n$ always refers to the binary logarithm of $n$, i.e. $\log_2 n$. We write $\log^k n$ as a shorthand for $(\left\lceil\log n \right\rceil)^k$ and finally $\mathit{\log n-1}$ as $z$ such as $SUCC(z,\mathit{logn})$. 
We assume that all structures have at least \emph{three} elements. This results in a cleaner presentation, avoiding the trivial cases of structures with only one element which would satisfy $0 = 1$ and structures with only two elements which would unnecessarily complicate the definition of the bounded binary arithmetic operations in Section~\ref{sec:arithexampls}.

\section{\texorpdfstring{$\mathrm{SO}^{\mathit{plog}}$}{TEXT}: A Restricted Second-Order Logic}{\label{sec:soplog}}

We define $\mathrm{SO}^{\mathit{plog}}$ as the restricted second-order logic obtained by extending \emph{existential} first-order logic with (1) universal and existential second-order quantifiers that are restricted to range over relations of poly-logarithmic size in the size of the structure, and (2) universal first-order quantifiers that are restricted to range over the tuples of such poly-logarithmic size relations. 

\begin{definition}[Syntax of $\mathrm{SO}^{\mathit{plog}}$]
For every $r{\geq}1$ and $k{\geq}0$, the language of $\mathrm{SO}^{\mathit{plog}}$ extends the language of first-order logic with countably many second-order variables $X_1^{r,\log^k}$, $X_2^{r,\log^k}, \dots$ of {\em arity $r$} and {\em exponent $k$}. The set of well-formed $\mathrm{SO}^{\mathit{plog}}$-formulae (wff) of vocabulary $\sigma$ is inductively defined as follows:
\begin{enumerate}[i.]

\item Every wff of vocabulary $\sigma$ in the existential fragment of first-order logic with equality is a wff.

\item If $X^{r,\log^k}$ is a second-order variable and $t_1, \ldots, t_r$ are first-order terms, then both $X^{r,\log^k}(t_1, \ldots, t_r)$ and $\neg X^{r,\log^k}(t_1, \ldots, t_r)$ are wff's.

\item If $\varphi$ and $\psi$ are wff's, then $(\varphi \wedge \psi)$ and $(\varphi \vee \psi)$ are wff's.
    
\item If $\varphi$ is a wff, $X^{r,\log^k}$ is a second-order variable and $\bar{x}$ is an $r$-tuple of first-order variables, then $\forall \bar{x} (X^{r,\log^k}(\bar{x}) \rightarrow \varphi)$ is a wff.    
    
\item If $\varphi$ is a wff and $x$ is a first-order variable, then $\exists x \varphi$ is a wff.   
     
\item If $\varphi$ is a wff and $X^{r,\log^k}$ is a second-order variable, then both $\exists X^{r,\log^k} \varphi$ and $\forall X^{r,\log^k} \varphi$ are wff's.
    
\end{enumerate}
\end{definition}

Note that the first-order terms $t_i$ in these rules are either first-order variables $x_1, x_2, \ldots$ or constant symbols; we do not consider function symbols. Whenever the arity is clear from the context, we write $X^{\log^k}$ instead of $X^{r,\log^k}$. 
    
\begin{definition}[Semantics of $\mathrm{SO}^{\mathit{plog}}$]
Let $\mathbf{A}$ be a $\sigma$-structure where $|A| = n \geq 2$. A valuation over $\mathbf{A}$ is any function \textit{val} which assigns appropriate values to all first- and second-order variables and satisfies the following constraints: 
\begin{itemize}

\item If $x$ is a first-order variable then $\mathit{val}(x) \in A$. 

\item If $X^{r,\log^k}$ is a second-order variable, then 
\[\mathit{val}(X^{r,\log^k}) \in \{R\subseteq A^r \mid |R| \leq (\lceil \log n \rceil)^k\}.\]
\end{itemize}
\end{definition}
As usual, we say that a valuation $\mathit{val}$ is $V$-equivalent to a valuation $\mathit{val}'$ if $\mathit{val}(V') = \mathit{val}'(V')$ for all variables $V'$ other than $V$.

$\mathrm{SO}^{\mathit{plog}}$ extends the notion of satisfaction of first-order logic, with the following rules:
\begin{itemize}
     \item $\mathbf{A},\mathit{val} \models X^{r,\log^k}(x_1,\dots,x_r) $ iff $(\mathit{val}(x_1),\dots,\mathit{val}(x_r))\in \mathit{val}( X^{r,\log^k})$.
     \item $\mathbf{A},\mathit{val} \models \neg X^{r,\log^k}(x_1,\dots,x_r) $ iff $(\mathit{val}(x_1),\dots,\mathit{val}(x_r))\not\in \mathit{val}( X^{r,\log^k})$.
     
     \item $\mathbf{A},\mathit{val} \models \exists X^{r,\log^k} (\varphi)$  iff there is a valuation $\mathit{val}'$ which is $X^{r,\log^k}$-equivalent to $\mathit{val}$ such that $\mathbf{A}, \mathit{val}' \models \varphi$.
     
     \item $\mathbf{A},\mathit{val} \models \forall X^{r,\log^k} (\varphi)$  iff, for all valuations $\mathit{val}'$ which are $X^{r,\log^k}$-equivalent to $\mathit{val}$, it holds that $\mathbf{A}, \mathit{val}' \models \varphi$.
     
\end{itemize}

\begin{remark}\label{r1}
The standard (unbounded) universal quantification of first-order logic formulae of the form 
$\forall x \varphi$ can be expressed in $\mathrm{SO}^{\mathit{plog}}$ by formulae of the form $\forall X^{\log^0} \forall x (X^{\log^0}(x) \rightarrow \varphi)$. Thus, even though $\mathrm{SO}^{\mathit{plog}}$ only allows a restricted form of universal first-order quantification, it can nevertheless express every first-order query. This is however not applicable to its existential fragment.   
\end{remark} 

We denote by $\Sigma^{\mathit{plog}}_m$, where $m \geq 1$, the class of $\mathrm{SO}^{\mathit{plog}}$-formulae of the form:
\[\exists X^{\log^{k_{11}}}_{11} \cdots \exists X^{\log^{k_{1s_1}}}_{1s_1} \forall X^{\log^{k_{21}}}_{21} \cdots \forall X^{\log^{k_{2s_2}}}_{2s_2} \cdots Q X^{\log^{k_{m1}}}_{m1} \cdots Q X^{\log^{k_{ms_m}}}_{ms_m} \psi,\]
where $Q$ is either $\exists$ or $\forall$ depending on whether $m$ odd or even, respectively, and $\psi$ is an $\mathrm{SO}^{\mathit{plog}}$-formula \emph{free} of second-order quantifiers.  
Analogously, we denote by $\Pi^{\mathit{plog}}_m$ the class of $\mathrm{SO}^{\mathit{plog}}$-formulae of the form:
\[\forall X^{\log^{k_{11}}}_{11} \cdots \forall X^{\log^{k_{1s_1}}}_{1s_1} \exists X^{\log^{k_{21}}}_{21} \cdots \exists X^{\log^{k_{2s_2}}}_{2s_2} \cdots Q X^{\log^{k_{m1}}}_{m1} \cdots Q X^{\log^{k_{ms_m}}}_{ms_m} \psi.\]

We say that an $\mathrm{SO}^{\mathit{plog}}$-formula is in \emph{quantifier prefix normal form} (QNF) if it belongs to either $\Sigma^{\mathit{plog}}_m$ or $\Pi^{\mathit{plog}}_m$ for some $m \geq 1$.

\begin{lemma}\label{lem-snf}
For every $\mathrm{SO}^{\mathit{plog}}$-formula $\varphi$, there is an equivalent $\mathrm{SO}^{\mathit{plog}}$-formula $\varphi'$ that is in QNF.
\end{lemma}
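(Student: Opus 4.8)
The plan is to proceed by structural induction on $\varphi$, pushing all second-order quantifiers outward to the front while leaving the first-order quantifiers buried in the (second-order quantifier-free) matrix. The base cases---formulae in the existential fragment of first-order logic, and second-order atoms $X^{r,\log^k}(\bar t)$ or their negations---are already free of second-order quantifiers, so they lie in QNF (prepending a vacuous block such as $\exists W^{1,\log^0}$ to meet the $m\geq 1$ convention). For the connectives $\wedge$ and $\vee$ I would assume both operands are in QNF, rename bound variables apart, and merge the two alternating prefixes into a single one: blocks of the same quantifier type commute and can be pulled across $\wedge$ and $\vee$ by the usual equivalences, and shorter prefixes are padded with vacuous blocks to align the alternations. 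Crucially, none of these moves combines relations through a quantifier swap, so the poly-logarithmic size bounds are untouched.

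The interesting inductive cases are the first-order quantifiers. For a bounded universal $\forall \bar x\,(X^{r,\log^k}(\bar x)\rightarrow\varphi)$ with $\varphi=\mathcal{Q}\,\chi$ in QNF, I would pull the second-order prefix $\mathcal{Q}$ out past the bounded universal one block at a time. A universal second-order block simply commutes with $\forall\bar x\,(X(\bar x)\rightarrow\cdot)$. An existential second-order block $\exists Z^{s,\log^{k'}}$ is pulled out by the arity-increasing trick: replace $Z$ by a fresh $Z'$ of arity $s+r$ whose slice $Z'_{\bar a}=\{\bar c:(\bar c,\bar a)\in Z'\}$ plays the role of the witness for $Z$ at the point $\bar a$, and replace each atom $Z(\bar t)$ in $\chi$ by $Z'(\bar t,\bar x)$. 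The point---and here the poly-logarithmic restriction is used rather than fought---is that the required $Z'$ ranges only over $\bar a\in X$, so its size is bounded by $|X|\cdot\log^{k'} n\le\log^k n\cdot\log^{k'} n=\log^{k+k'} n$, which is again poly-logarithmic; hence $Z'$ is a legal $\mathrm{SO}^{\mathit{plog}}$ variable of exponent $k+k'$. The resulting prefix keeps the alternation pattern of $\mathcal{Q}$, and the bounded universal stays innermost in the matrix.

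The genuinely delicate case, and the one I expect to be the main obstacle, is the \emph{unbounded} first-order existential $\exists x\,\varphi$. The naive move is illegitimate: when $\varphi$ begins with a universal second-order block one would need $\exists x\,\forall Z\equiv\forall Z\,\exists x$, which fails under the size restriction (for instance $\forall Z\,\exists x\,\neg Z(x)$ holds but $\exists x\,\forall Z\,\neg Z(x)$ does not, since every $x$ lies in the admissible relation $Z=\{x\}$), and the arity-increasing trick is unavailable because the unbounded range of $x$ would blow the witness relation up to size $n\cdot\log^{k'} n$. My way around this is to first promote the unbounded existential to a second-order one over a singleton: using that a structure has at least three elements and that $\log^0 n=1$, I would rewrite
\[
\exists x\,\varphi \;\equiv\; \exists X^{1,\log^0}\bigl(\exists y\,X(y)\,\wedge\,\forall y\,(X(y)\rightarrow\varphi[x:=y])\bigr),
\]
which trades the offending unbounded $\exists x$ for a second-order existential $\exists X^{1,\log^0}$ together with a \emph{bounded} universal. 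The second-order prefix of $\varphi$ is now pulled across this bounded universal exactly as in the previous case, after which $\exists X^{1,\log^0}$ takes its proper place at the head of the prefix: if $\varphi\in\Sigma^{\mathit{plog}}_m$ it merges into the leading existential block, and if $\varphi\in\Pi^{\mathit{plog}}_m$ it yields a prefix of the form $\exists X^{1,\log^0}\forall\cdots$, i.e.\ a $\Sigma^{\mathit{plog}}_{m+1}$ formula. Thus no illegitimate swap of a second-order universal past a first-order existential is ever performed; the existential is instead lifted to the second order, where it sits legally within the alternation. Finally the second-order quantifier cases $\exists Z\,\varphi$ and $\forall Z\,\varphi$ are immediate---prepend $Z$, merging it into the leading block or opening a new one, with renaming to avoid clashes. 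Checking throughout that exponents only ever \emph{add}, so that every relation introduced remains of poly-logarithmic size, is the recurring bookkeeping that makes the argument go through.
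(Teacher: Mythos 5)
Your overall architecture (structural induction, the singleton-relation trick $\exists x\,\varphi \equiv \exists X^{1,\log^0}(\exists y\, X(y)\wedge\forall y\,(X(y)\rightarrow\varphi[x:=y]))$ for the unbounded existential) is sound and in that respect close to the paper, but the step you flag as routine --- pulling an existential second-order block across a bounded universal by the arity-increasing trick --- is not an equivalence as you state it, and this is a genuine gap. The problem is that in $\mathrm{SO}^{\mathit{plog}}$ the size bound is part of the \emph{semantics of quantification}, not of the formula: $\chi$ can be true of a slice $Z'_{\bar a}$ that is too large to be a legal value of $Z^{s,\log^{k'}}$. Concretely, take $k'=0$, $k\geq 1$, and $\chi \equiv \exists u\exists v\,(Z(u)\wedge Z(v)\wedge u\neq v)$. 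Since every valuation of $Z^{1,\log^0}$ has at most one element, $\forall x\,(X(x)\rightarrow \exists Z^{1,\log^0}\chi)$ is false whenever $X$ is nonempty; but your transform $\exists Z'^{2,\log^{k}}\,\forall x\,(X(x)\rightarrow \exists u\exists v\,(Z'(u,x)\wedge Z'(v,x)\wedge u\neq v))$ is satisfied with $\mathit{val}(X)=\{a\}$ and $Z'=\{(b,a),(c,a)\}$, $b\neq c$, once $\lceil\log n\rceil^{k}\geq 2$. So only the left-to-right direction of your claimed equivalence holds. (Note that your \emph{singleton} instance of the same trick, used for $\exists x$, is sound precisely because there $|X|\leq 1$ forces every slice of $Z'$ to satisfy the original bound; the failure is exactly for exponent $k\geq 1$.) The gap is repairable: conjoin a guard asserting that for each $\bar a$ with $X(\bar a)$ the slice $Z'_{\bar a}$ has at most $\lceil\log n\rceil^{k'}$ elements, expressible in the style of the paper's cardinality-comparison formula $|X|\leq|Y|$ via an injection into a $\mathrm{DEF}_{k'}$-defined relation $I$; since that guard uses only additional existential second-order variables and bounded universals, it merges into the $\exists Z'$ block and your alternation-preserving normalization survives --- but as written the proof is incorrect.

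It is worth noting that the paper takes a different and cheaper route that sidesteps cardinality issues entirely: it first brings \emph{all} quantifiers, first- and second-order mixed, into a prenex prefix, commutes same-polarity quantifiers within each block, and then eliminates the stuck first-order quantifiers by the singleton equivalences $\exists x\,\forall Y^{\log^k}\psi \equiv \exists X^{\log^0}\forall Y^{\log^k}\exists x\,(X^{\log^0}(x)\wedge\psi)$ and $\forall x\,\exists Y^{\log^k}\psi \equiv \forall X^{\log^0}\exists Y^{\log^k}\forall x\,(X^{\log^0}(x)\rightarrow\psi)$ --- i.e., exactly the device you reserved for the unbounded existential, applied uniformly. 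The price is that a stuck first-order quantifier spawns a second-order quantifier of the \emph{same} polarity in front of the opposite block (so the paper's normal form need not preserve your alternation count as tightly), and first-order quantifiers simply remain in the matrix, which is all that QNF requires. Your Skolemization, once patched with the slice-size guard, would actually buy something stronger (preservation of the second-order alternation pattern through bounded universals), but the lemma itself does not need that strength.
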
 

\begin{proof}
An easy induction using renaming of variables and equivalences such as $(\neg \exists X^{\log^k} \varphi)$ $\equiv \forall X^{\log^k} (\neg \varphi)$ and $(\phi \vee
\exists x \psi) \equiv \exists x (\phi \vee \psi)$ if $x$ is not free in $\phi$, shows that each $\mathrm{SO}^{\mathit{plog}}$-formula is logically equivalent to an $\mathrm{SO}^{\mathit{plog}}$-formula in \emph{prenex normal form}, i.e., to a formula where all first- and second-order quantifiers are grouped together at the front, forming alternating blocks of consecutive existential or universal quantifiers. Yet the problem is that first- and second-order quantifiers might be mixed. Among the quantifiers of a same block, though, it is clearly possible to commute them 
so as to get those of second-order at the beginning of the block. But, we certainly cannot commute different quantifiers without altering the meaning of the formula. What we can do is to replace first-order quantifiers by second-order quantifiers so that all quantifiers at the beginning of the formula are of second-order, and they are then eventually followed by first-order quantifiers. This can be done using the following equivalences:
\[ \exists x \forall Y^{\log^k} \psi \equiv \exists X^{\log^0} \forall Y^{\log^k} \exists x (X^{\log^0}(x) \wedge \psi).\]
\[ \forall x \exists Y^{\log^k} \psi \equiv \forall X^{\log^0} \exists Y^{\log^k} \forall x (X^{\log^0}(x) \rightarrow \psi). \]
\end{proof}

Next we present examples of problems which are definable in $\mathrm{SO}^{\mathit{plog}}$. We start with a simple but useful example, and then move to examples which give a better idea of the actual expressive power of $\mathrm{SO}^{\mathit{plog}}$.

\begin{example}
Let $X$ and $Y$ be $\mathrm{SO}^{\mathit{plog}}$ variables of the form $X^{r_1,\log^{k}}$ and $Y^{r_2,\log^{k}}$. The following $\Sigma^{\mathit{plog}}_1$ formula, denoted as $|X| \leq |Y|$, expresses that the cardinality of (the relation assigned by the current valuation of) $X$ is less than or equal to that of $Y$. \\[0.1cm]
\hspace*{0.38cm}$\exists R \Big(\forall \bar{x} \big(X(\bar{x}) \to \exists \bar{y} \big( Y(\bar{y})\wedge R(\bar{x},\bar{y}) \wedge \forall \bar{z} (X(\bar{z})\to (\bar{z}{\neq} \bar{x} \to \neg R(\bar{z},\bar{y})))\big)\big)\Big)$,\\[0.1cm]
where $R$ is an $\mathrm{SO}^{\mathit{plog}}$ variable of arity $r_1+r_2$ and exponent $k$. In turn $|X|{=}|Y|$ can be defined as $|X|{\leq}|Y| \wedge |Y|{\leq}|X|$.
\end{example}

It is not difficult to see that existential $\mathrm{SO}^{\mathit{plog}}$ can naturally define what we could call poly-logarithmically bounded versions of NP complete problems.

\begin{example}\label{polylogclique}
Let $G=(V,E)$ be an $n$-node undirected graph. The following sentence expresses a poly-logarithmically bounded version of the clique NP-complete problem. It holds iff $G$ contains a clique of size $\lceil \log n\rceil^k$. 
\[\exists I S \big(\mathrm{DEF}_k(I) \wedge |S|{=}|I| \wedge\forall x \big(S(x) {\to} \forall y (S(y) \to (x\neq y \to (E(x,y) \wedge E(y,x))))\big)\big)\]
Here $I$ and $S$ are second-order variables of arity and exponent $k$ and $\mathrm{DEF}_k(I)$ holds iff $I$ is interpreted with the $k$-ary relation $\{0, \ldots, \lceil \log n \rceil -1\}^k$. Clearly $\mathrm{DEF}_k(I)$ can be defined in existential $\mathrm{SO}^{\mathit{plog}}$ as shown in~(\ref{arith3}) in our next section. 
Other bounded versions of classical Boolean NP-complete problems that are easily expressible in $\Sigma^{\mathit{plog}}_1$ are for instance to decide whether $G$ has an induced subgraph of size $\lceil \log n \rceil^k$ that is $3$-colourable, or whether a $G$ has an induced subgraph which is isomorphic to another given graph of at most polylog size w.r.t. the size of $G$.
\end{example}

We conclude this section with an example of a $\mathrm{SO}^{\mathit{plog}}$ sentence which expresses the standard version of DNFSAT.
\begin{example}
Let DNFSAT denote the class of satisfiable propositional formulas in disjunctive normal form.
In the standard encoding of DNF formulae as word models of alphabet $\sigma = \{(,),\wedge,\vee, \neg, 0, 1, X\}$, DNFSAT is decidable in $P$~\cite{Cook_71}. In this encoding, 
the input formula is a disjunction of arbitrarily many clauses enclosed in pairs of matching parenthesis. Each clause is the conjunction of an arbitrary number of literals. Each literal is a variable of the form $X_w$, where the subindex $w \in \{0, 1\}^*$, possibly preceded by a negation symbol.
Obviously, the complement NODNFSAT of DNFSAT is also in P. In $\Pi^{\mathrm{plog}}_2$ we can define NODNFSAT by means of a sentence stating that for every clause there is a pair of complementary literals. Every clause is logically defined by a pair of matching parentheses such that there is no parenthesis in between. A pair of complementary literals is defined by a bijection (of size $\leq \lceil \log n \rceil$) between the subindexes of two literals, which preserves the bit values and such that exactly one of the literals is negated. The following sentence expresses this formally. \\[0.4cm]
$\forall x_0 x_1 \Bigg(I_((x_0) \wedge I_)(x_1) \wedge \forall x \big(x_0 < x < x_1 \rightarrow \neg (I_((x)\vee I_)(x)\big) \rightarrow$\\ 
\hspace*{1.3cm} $\exists H \bigg(\forall x y \big(H(x, y) \rightarrow x_0 < x < x_1 \wedge x_0 < y < x_1 \big) \wedge$\\ 
\hspace*{1.9cm} $\forall x y z \big(H(x, z) \wedge H(y, z) \rightarrow x = y\big) \wedge$\\[0.1cm]
\hspace*{1.9cm} $\forall x y z \big(H(x, y) \wedge H(x, z) \rightarrow y = z\big) \wedge$\\[0.1cm]
\hspace*{1.9cm} $\forall x y \big(H(x, y) \rightarrow \big((I_0(x) \wedge I_0(y)) \vee (I_1(x) \wedge I_1(y)) \big) \big) \wedge$\\[0.1cm]
\hspace*{1.9cm} $\exists x_2 x_3 x_4 x_5 y_2 y_3 y_4 y_5 \Big(\mathrm{SUCC}(x_2, x_3) \wedge \mathrm{SUCC}(y_2, y_3) \wedge$\\
\hspace*{2.5cm} $(I_((x_2) \vee I_\wedge(x_2)) \wedge I_X(x_3) \wedge (I_\wedge(x_5) \vee I_)(x_5)) \wedge$\\[0.1cm]
\hspace*{2.5cm} $I_\neg(y_2) \wedge I_X(y_3) \wedge (I_\wedge(y_5) \vee I_)(y_5)) \wedge $ \\[0.1cm]
\hspace*{2.5cm} $\mathrm{SUCC}(x_3, x_4) \wedge \mathrm{SUCC}(y_3, y_4) \wedge H(x_4, y_4) \wedge $\\[0.1cm]
\hspace*{2.5cm} $\forall x y \big(H(x, y) \rightarrow \big(\exists z_1 z_2 (\mathrm{SUCC}(x, z_1) \wedge \mathrm{SUCC}(y, z_2) \wedge H(z_1, z_2) ) \vee$\\[0.1cm] 
\hspace*{5.1cm} $(\mathrm{SUCC}(x, x_5) \wedge \mathrm{SUCC}(y, y_5))\big)\big)\Big)\bigg)\Bigg)$\\[0.1cm]
It is fairly easy to see that this formula can be translated into an equivalent formula in $\Pi^{\mathit{plog}}_2$. Note that the (unbounded) first-order universal quantifiers in the first line can be replaced by $\mathrm{SO}^{\mathit{plog}}$ quantifiers of second-order with variables of exponent $1$ as per Remark~\ref{r1}. The exponent of $H$ is $1$ as well. 

Similarly, DNFSAT can be defined in $\Sigma^{\mathit{plog}}_2$ by a sentence stating that there is a clause that does not have a pair of complementary literals.
\end{example}

\section{Bounded Binary Arithmetic Operations in \texorpdfstring{$\Sigma^{\mathit{plog}}_1$}{TEXT}}\label{sec:arithexampls}\label{sec:examples}

We define $\Sigma^{\mathit{plog}}_1$-formulae that describe the basic (bounded) arithmetic operations of sum, multiplication, division and modulo among binary positive integers between $0$ and $2^{\lceil\log n \rceil^k}-1$ for some fixed $k \geq 1$. These formulae are later needed for proving our main result regarding the expressive power of the existential fragment of $\mathrm{SO}^{\mathit{plog}}$.  

Note that it is \emph{not} immediately obvious that these operations can indeed be expressed in $\Sigma^{\mathit{plog}}_1$, since this logic cannot express all existential second-order properties over relations of polylogarithmic size due to its restricted universal first-order quantification.

In our approach, binary numbers between $0$ and $\lceil 2^{(\log n)^k} \rceil - 1$ are represented by means of ($\mathrm{SO}^{\mathit{plog}}$) relations. 
\begin{definition}\label{def:repBinaryNumbers}
Let $b = b_0 \cdots b_l$ be a binary number, where $b_0$ and $b_l$ are the least and most significant bits of $b$, respectively, and $l \leq \lceil \log n \rceil^k$. Let $B = \{0, \ldots, \lceil \log n \rceil -1\}$. The relation $R_b$ \emph{encodes the binary number} $b$ if the following holds: $(a_0, \ldots, a_{k-1}, a_k) \in R_b$ iff $(a_0, \ldots, a_{k-1}) \in B^k$ is the $i$-th tuple in the lexicographical order of $B^k$, $a_k = 0$ if $i > l$, and $a_k = b_i$ if $0 \leq i \leq l$. 
\end{definition}

Note that the size of $R_b$ is exactly $\lceil \log n \rceil^k$, and thus $R_b$ is a valid valuation for $\mathrm{SO}^{\mathit{plog}}$ variables of the form $X^{k+1, \log^k}$.
The numerical order relation $\leq_k$ among $k$-tuples can be defined as follows: 
\begin{align}
x_0 \leq_1 y_0 \quad &\equiv \quad x_0 \leq y_0 \quad \text{and} \notag\\
\bar{x} \leq_k \bar{y} \quad &\equiv \quad (x_0 \leq y_0 \wedge x_0 \neq y_0) \vee (x_0 = y_0 \wedge (x_1, \ldots, x_{k-1}) \leq_{k-1} (y_1, \ldots, y_{k-1})) \label{arith1}
\end{align}

In our approach, we need a successor relation $\mathrm{SUCC}_k$ among the $k$-tuples in $B^k$, where $B$ is the set of integers between $0$ and $\lceil \log n \rceil-1$ (cf. Definition~\ref{def:repBinaryNumbers}). 
\begin{align}
  \mathrm{SUCC}_1(x_0,y_0) &\equiv y_0 \leq \mathit{logn} \wedge y_0 \neq \mathit{logn} \wedge \mathrm{SUCC}(x_0,y_0) \quad \text{and} \notag\\
\mathrm{SUCC}_k(\bar{x},\bar{y}) &\equiv y_0 \leq \mathit{logn} \wedge y_0 \neq \mathit{logn} \wedge \notag \\
[\, (y_0 = x_0 \wedge & \mathrm{SUCC}_{k-1}(x_1,\dots,x_{k-1},y_1,\dots,y_{k-1})) \, \vee (\mathrm{SUCC}(x_0, y_0) \wedge \notag\\
\mathrm{SUCC} (x_1, & \mathit{logn}) \wedge \cdots \wedge \mathrm{SUCC}(x_{k-1},\mathit{logn}) \wedge y_1= 0 \wedge \cdots \wedge y_{k-1}=0)\, \label{arith2}]
\end{align}

It is useful to define an auxiliary predicate $\mathrm{DEF_k}(I)$, where $I$ is a second-order variable of arity and exponent $k$, such that ${\bf A}, \mathit{val} \models \mathrm{DEF_k}(I)$ if $\mathit{val}(I) = B^k$. Please, note that we abuse the notation, writing for instance $\bar{x} = \bar{0}$ instead of $x_0 = 0 \wedge \cdots \wedge x_{k-1} = 0$. Such abuses of notation should nevertheless be clear from the context.   
\begin{align}
\mathrm{DEF}_k(I) & \equiv \exists \bar{x} ( \bar{x} = \bar{0} \wedge I(\bar{x})) \wedge \forall \bar{y} (I(\bar{y}) \rightarrow ((\mathrm{SUCC}(y_0, \mathit{logn}) \wedge \cdots \notag \\
&  \wedge  \mathrm{SUCC}(y_k, \mathit{logn})) \vee \exists \bar{z} (SUCC_k(\bar{y},\bar{z}) \wedge I(\bar{z})))) \label{arith3}
\end{align}

The formula $\mathrm{BIN}_k(X)$, where $X$ is a second-order variable of arity $k+1$ and exponent $k$, expresses that $X$ encodes (as per Definition~\ref{def:repBinaryNumbers}) a binary number between $0$ and $2^{\lceil\log n \rceil^k} - 1$ and can be written as follows. 
\[ \exists I (\mathrm{DEF}_k(I) \wedge \forall \bar{x} (I(\bar{x}) \to ((X(\bar{x},0) \wedge \neg X(\bar{x},1)) \vee (X(\bar{x},1) \wedge \neg X(\bar{x},0))))).\]
However, since $X$ is of exponent $k$, the semantics of $\mathrm{SO}^{\mathit{plog}}$ determines that the number of tuples in any valid valuation of $X$ is always bounded by $\lceil\log n \rceil^k$. Thus $\mathrm{BIN}_k(X)$ can also be expressed by the following equivalent, simpler formula.
\begin{equation}
\mathrm{BIN}_k(X) \equiv \exists I (\mathrm{DEF}_k(I) \wedge \forall \bar{x} (I(\bar{x}) \to (X(\bar{x},0) \vee X(\bar{x},1)))) \label{arith4}
\end{equation}

In the following, $\mathrm{BIN}_k(X, I)$ denotes the sub-formula $\forall \bar{x} (I(\bar{x}) \to (X(\bar{x},0) \vee X(\bar{x},1)))$ of $\mathrm{BIN}_k(X)$. 

The comparison relations $X =_k Y$ and $X <_k Y$ ($X$ is strictly smaller than $Y$) among binary numbers encoded as second-order relations are defined as follows:
\begin{equation}
X =_k Y \equiv \exists I \big(\mathrm{DEF}_k(I) \wedge \mathrm{BIN}_k(X, I) \wedge \mathrm{BIN}_k(Y,I) \wedge {=_k}(X, Y, I) \big), \label{arith5}
\end{equation}

\noindent
where ${=_k}(X, Y, I) \equiv \forall \bar{x} \big( I(\bar{x}) \to \exists z (X(\bar{x},z) \wedge Y(\bar{x}, z))\big)$.
\begin{equation}
X <_k Y \equiv \exists I \big(\mathrm{DEF}_k(I) \wedge \mathrm{BIN}_k(X, I) \wedge \mathrm{BIN}_k(Y,I) \wedge {<_k}(X, Y, I)\big) , \label{arith6} 
\end{equation}

\noindent
where ${<_k}(X, Y, I) \equiv \exists \bar{x} \big( I(\bar{x}) \wedge X(\bar{x}, 0) \wedge Y(\bar{x}, 1) \land \forall \bar{y} \big(I(\bar{y}) \to (\bar{y} \leq_k \bar{x} \vee \exists z (X(\bar{y},z) \wedge Y(\bar{y}, z)))\big)\big)$.

Sometimes we need to determine if the binary number encoded in (the current valuation of) a second-order variable $X$ of arity $k+1$ and exponent $k$ corresponds to the binary representation of an individual $x$ from the domain. The following $\mathrm{BNUM}_k(X,x)$ formula holds whenever that is the case.  
\begin{align}
    \mathrm{BNUM}_k&(X,x) \equiv 
\exists I \big(\mathrm{DEF}_k(I) \wedge \mathrm{BIN}_k(X,I) \wedge \notag \\
&    \forall \bar{y} \big(I(\bar{y}) \to \big((y_0 = 0 \wedge \cdots \wedge y_{k-2} = 0 \wedge (X(\bar{y}, 1) \leftrightarrow BIT(x, y_{k-1}))) \vee \notag \\
&     \hspace*{1.9cm}(\neg (y_0 = 0 \wedge \cdots \wedge y_{k-2} = 0) \wedge X(\bar{y},0))\big)\big)\big) \label{arith7}
\end{align}

We use $\mathrm{BNUM}_k(X,x,I)$ to denote the sub-formula $\forall \bar{y} (I(\bar{y}) \to ((y_0 = 0 \wedge \cdots \wedge y_{k-2} = 0 \wedge (X(\bar{y}, 1) \leftrightarrow BIT(x, y_{k-1}))) \vee (\neg (y_0 = 0 \wedge \cdots \wedge y_{k-2} = 0) \wedge X(\bar{y},0))))$ of $\mathrm{BNUM}_k(X,x)$.

We now proceed to define $\Sigma^{\mathit{plog}}_1$-formulae that describe basic (bounded) arithmetic operations among binary numbers. We start with $\mathrm{BSUM}_k(X,Y,Z)$, where $X$, $Y$ and $Z$ are free-variables of arity $k+1$ and exponent $k$. This formula holds if (the current valuation of) $X$, $Y$ and $Z$ represent binary numbers between $0$ and $2^{\lceil\log n \rceil^k} - 1$, and $X + Y = Z$. The second-order variables $I$ and $W$ in the formula are of arity $k$ and $k+1$, respectively, and both have exponent $k$. We use the traditional carry method, bookkeeping the carried digits in $W$. 
\begin{align}
\mathrm{BSUM}_k&(X,Y,Z) \equiv \notag \\
\exists I  W \big(&\mathrm{DEF}_k(I) \wedge \mathrm{BIN}_k(X,I) \wedge \mathrm{BIN}_k(Y,I) \wedge \mathrm{BIN}_k(Z,I) \wedge \mathrm{BIN}_k(W,I) \wedge \notag \\
&W(\bar{0},0) \wedge ({<_k}(X, Z, I) \vee {=_k}(X, Z, I)) \wedge ({<_k}(Y, Z, I) \vee {=_k}(Y, Z, I)) \wedge \notag \\
&\forall \bar{x} (I(\bar{x}) \to ( (\bar{x} = \bar{0} \wedge \varphi(X, Y, Z)) \vee  \notag\\
&\hspace*{1.9cm} (\exists \bar{y} (SUCC_k(\bar{y},\bar{x}) \wedge \psi(\bar{x}, \bar{y}, W, X, Y)) \wedge  \alpha(\bar{x}, W, X, Y, Z))))\big) \label{arith8}
\end{align}
where $\varphi$ holds if the value of the least significant bit of $Z$ is consistent with the sum of the least significant bits of $X$ and $Y$. Formula $\psi$ holds if the value of the bit in position $\bar{x}$ of $W$ (i.e., the value of the carried bit) is consistent with the sum of the values of the bits in the position preceding $\bar{x}$ of $W,X$ and $Y$. Finally, $\alpha$ holds if the value of the bit in position $\bar{x}$ of $Z$ is consistent with the sum of the corresponding bit values of $W,X$ and $Z$. The actual sub-formulae $\varphi$, $\psi$ and $\alpha$ can be written respectively as follows:
\begin{align*}
\varphi(X, Y, Z) \equiv &
\big( Z(\bar{0},0) \wedge ( (X(\bar{0},0) \wedge Y(\bar{0},0)) \vee (X(\bar{0},1) \wedge Y(\bar{0},1)) ) \big) \vee\\
&\big( Z(\bar{0},1) \wedge ( (X(\bar{0},1) \wedge Y(\bar{0},0)) \vee (X(\bar{0},0) \wedge Y(\bar{0},1)) ) \big)
\end{align*} 
\begin{align*}
\psi(&\bar{x}, \bar{y}, W, X, Y) \equiv \\
&\big( W(\bar{x},0) \wedge ( (W(\bar{y},0)\wedge X(\bar{y},0)\wedge Y(\bar{y},0)) \vee (W(\bar{y},0)\wedge X(\bar{y},0)\wedge Y(\bar{y},1))\vee\\
&\hspace*{1.9cm} (W(\bar{y},0)\wedge X(\bar{y},1)\wedge Y(\bar{y},0)) \vee (W(\bar{y},1)\wedge X(\bar{y},0)\wedge Y(\bar{y},0))) \big) \vee \\
&\big( W(\bar{x},1) \wedge ( (W(\bar{y},1)\wedge X(\bar{y},1)\wedge Y(\bar{y},0)) \vee (W(\bar{y},1)\wedge X(\bar{y},0)\wedge Y(\bar{y},1))\vee\\
&\hspace*{1.9cm} (W(\bar{y},0)\wedge X(\bar{y},1)\wedge Y(\bar{y},1)) \vee (W(\bar{y},1)\wedge X(\bar{y},1)\wedge Y(\bar{y},1))) \big) 
\end{align*} 
\begin{align*}
\alpha(&\bar{x}, W, X, Y, Z) \equiv \\
&\big( Z(\bar{x},0) \wedge ( (W(\bar{x},0)\wedge X(\bar{x},0)\wedge Y(\bar{x},0)) \vee (W(\bar{x},0)\wedge X(\bar{x},1)\wedge Y(\bar{x},1))\vee\\
&\hspace*{1.8cm} (W(\bar{x},1)\wedge X(\bar{x},1)\wedge Y(\bar{x},0)) \vee (W(\bar{x},1)\wedge X(\bar{x},0)\wedge Y(\bar{x},1))) \big) \vee \\
&\big( Z(\bar{x},1) \wedge ( (W(\bar{x},0)\wedge X(\bar{x},0)\wedge Y(\bar{x},1)) \vee (W(\bar{x},0)\wedge X(\bar{x},1)\wedge Y(\bar{x},0))\vee\\
&\hspace*{1.8cm} (W(\bar{x},1)\wedge X(\bar{x},0)\wedge Y(\bar{x},0)) \vee (W(\bar{x},1)\wedge X(\bar{x},1)\wedge Y(\bar{x},1))) \big)
\end{align*}

For the operation of (bounded) multiplication of binary numbers, we define a formula $\mathrm{BMULT}_k(X, Y, Z)$, where $X$, $Y$ and $Z$ are free-variables of arity $k+1$ and exponent $k$. This formula holds if (the current valuations of) $X$, $Y$ and $Z$ represent binary numbers between $0$ and $2^{\lceil\log n \rceil^k} - 1$, and $X \cdot Y = Z$. 

The strategy to express the multiplication consists on keeping track of the (partial) sums of the partial products by means of a relation $R \subset B^k \times B^k \times \{0,1\}$ of size $\lceil \log n \rceil^{2k}$ (recall that $B = \{0, \ldots, \lceil \log n \rceil -1\}$). We take $X$ to be the multiplicand and $Y$ to be the multiplier. Let $\bar{a} \in B^k$ be the $i$-th tuple in the numerical order of $B^k$, let $R|_{\bar{a}}$ denote the restriction of $R$ to those tuples starting with $\bar{a}$, i.e., $R|_{\bar{a}} = \{(\bar{b}, c) \mid (\bar{a}, \bar{b}, c) \in R\}$, and let $\mathit{pred}(\bar{a})$ denote the immediate predecessor of $\bar{a}$ in the numerical order of $B^k$, then the following holds:

\begin{enumerate}[a.]
\item If $\bar{a} = \bar{0}$ and $Y(\bar{a},0)$, then $R|_{\bar{a}}$ encodes the binary number $0$.
\item If $\bar{a} = \bar{0}$ and $Y(\bar{a},1)$, then $R|_{\bar{a}} = X$.
\item If $\bar{a} \neq \bar{0}$ and $Y(\bar{a},0)$, then $R|_{\bar{a}} = R|_{\mathit{pred}(\bar{a})}$.
\item If $\bar{a} \neq \bar{0}$ and $Y(\bar{a},1)$, then (the binary number encoded by) $R|_{\bar{a}}$ results from adding $R|_{\mathit{pred}(\bar{a})}$ to the $(i-1)$-bits arithmetic left-shift of $X$.
\end{enumerate}

$\mathrm{BMULT}_k(X, Y, Z)$ holds if $Z = R|_{(a_0, \ldots, a_{k-1})}$ for $a_0 = \cdots = a_{k-1} = \lceil \log n \rceil -1$. 
Following this strategy, we can write $\mathrm{BMULT}_k(X, Y, Z)$ as follows. 
\begin{align}
\mathrm{BMULT}_k & (X,Y,Z) \equiv \notag \\
\exists I I' & R S W \big(\mathrm{DEF}_k(I) \wedge \mathrm{BIN}_k(X,I) \wedge \mathrm{BIN}_k(Y,I) \wedge \mathrm{BIN}_k(Z,I) \wedge \notag \\
&\mathrm{DEF}_{2k}(I') \wedge \mathrm{BIN}_{2k}(R,I') \wedge \mathrm{BIN}_{2k}(S,I') \wedge \mathrm{BIN}_{2k}(W,I') \wedge \notag \\
&\mathrm{SHIFT}(S, X, I) \wedge \notag \\
&\forall \bar{x} (I(\bar{x}) \to ( (\bar{x} = \bar{0} \wedge Y(\bar{x},0) \wedge \varphi_a(R, \bar{x})) \vee \notag \\
&\hspace*{1.5cm} (\bar{x} = \bar{0} \wedge Y(\bar{x},1) \wedge \varphi_b(R, \bar{x}, X)) \vee \notag \\
&\hspace*{1.5cm} (Y(\bar{x},0) \wedge \exists \bar{y} (SUCC_k(\bar{y},\bar{x}) \wedge \varphi_c(R, \bar{x}, \bar{y}))) \vee \notag \\
&\hspace*{1.5cm} (Y(\bar{x},1) \wedge \exists \bar{y} (SUCC_k(\bar{y},\bar{x}) \wedge \varphi_d(R, S, W, \bar{x}, \bar{y})))))\big) \label{arith9}
\end{align}
Here the variable $I$ has arity $k$ and exponent $k$. $I'$ is of arity $2k$ and exponent $2k$. The remaining second-order variables $R$, $S$ and $W$ are of arity $2k + 1$ and exponent $2k$.
The sub-formula $\varphi_a(R, \bar{x}) \equiv \forall \bar{y} (I(\bar{y}) \to R(\bar{x}, \bar{y}, 0))$  expresses that $R|_{\bar{x}}$ encodes the binary number $0$, the sub-formula $\varphi_b(R, \bar{x}, X) \equiv \forall \bar{y} (I(\bar{y}) \to \exists z(R(\bar{x}, \bar{y}, z) \wedge X(\bar{y}, z)))$ expresses that $R|_{\bar{x}} = X$, the sub-formula $\varphi_c(R, \bar{x}, \bar{y}) \equiv \forall \bar{w} (I(\bar{w}) \to \exists z(R(\bar{x}, \bar{w}, z) \wedge R(\bar{y}, \bar{w}, z)))$ expresses that $R|_{\bar{x}} = R|_{\bar{y}}$, and the sub-formula $\mathrm{SHIFT}(S, X, I)$ expresses that if $\bar{a} \in B^k$ is the $i$-th tuple in the numerical order of $B^k$, then $S|_{\bar{a}}$ is the $(i-1)$-bits arithmetic left-shift of $X$, i.e., $S|_{\bar{a}}$ is $X$ multiplied by $2^{i-1}$ in binary. 
\begin{align}
\mathrm{SHIFT}(S, X&, I) \equiv \exists x \bar{y} \big(\mathrm{SUCC}(x, logn) \wedge S(\bar{y}, \bar{x}, 0) \big) \wedge \notag \\
\forall \bar{x} \big(I(\bar{x}) \to &( (\bar{x} = \bar{0} \wedge \varphi_b(S,\bar{x}, X)) \vee \notag \\
&\hspace*{0.2cm} \exists \bar{y} (\mathrm{SUCC}_k(\bar{y},\bar{x}) \wedge \notag \\
&\hspace*{0.8cm} \forall \bar{z} (I(\bar{z}) \to ((\bar{z} = \bar{0} \wedge S(\bar{x}, \bar{z}, 0)) \vee \notag \\
&\hspace*{1.7cm} \exists \bar{z}' b (\mathrm{SUCC}_k(\bar{z}',\bar{z}) \wedge S(\bar{y}, \bar{z}', b) \wedge S(\bar{x}, \bar{z}, b))))))\big)   \label{arith10}  
\end{align}
Finally, the sub-formula $\varphi_d(R, S, W, \bar{x}, \bar{y})$ expresses that $R|_{\bar{x}}$ results from adding $R|_{\bar{y}}$ to $S|_{\bar{x}}$. The carried digits of this sum are kept in $W|_{\bar{x}}$. Given the formula $\mathrm{BSUM}_k$ described earlier, it is a straightforward task to write $\varphi_d$. We omit further details.  

The operations of division and modulo are expressed by $\mathrm{BDIV}_k(X, Y, Z, M)$, where $X$, $Y$, $Z$ and $M$ are free-variables of arity $k+1$ and exponent $k$. This formula holds if $Z$ is the quotient and $M$ the modulo (remainder) of the euclidean division of $X$ by $Y$, i.e., if  $Y\cdot Z + M = X$. 
 \begin{align}
    \mathrm{BDIV}_k(X, Y, Z, M)\equiv \notag \\
\exists I I' A R S W W' \big(&\mathrm{DEF}_k(I) \wedge \mathrm{BIN}_k(X,I) \wedge \mathrm{BIN}_k(Y,I) \wedge \mathrm{BIN}_k(Z,I) \wedge \notag \\
    &\mathrm{BIN}_k(M,I) \wedge \mathrm{BIN}_k(A,I) \wedge \neg \mathrm{BNUM}_k(Y,0,I) \wedge \notag \\
    &{<_k}(M, Y, I) \wedge \mathrm{BMULT}_k(Z, Y, A, I, I', R, S, W) \wedge \notag \\
    &\mathrm{BSUM}_k(A,M,X, I, W')\big). \label{arith11}
\end{align}
where $\mathrm{BMULT}_k(Z, Y, A, I, I', R, S, W)$ denotes the formula obtained from the formula $\mathrm{BMULT}_k(X, Y, Z)$ in~\ref{arith9} by deleting the second-order quantifiers (so that $I$, $I'$, $R$, $S$ and $W$ become free-variables) and by renaming $X$ and $Z$ as $Z$ and $A$, respectively. Likewise, $\mathrm{BSUM}_k(A,M,X, I, W')$ denotes the formula obtained from $\mathrm{BSUM}_k(X, Y, Z)$ in~\ref{arith8} by deleting the second-order quantifiers (so that $I$ and $W$ become free-variables) and by renaming $X$, $Y$, $Z$ and $W$ as $A$, $M$, $X$ and $W'$, respectively.

\section{The Poly-logarithmic Time Hierarchy}\label{sec:plh}

The sequential access that Turing machines have to their tapes makes it impossible to compute anything in sub-linear time. Therefore, logarithmic time complexity classes are usually studied using models of computation that have random access to their input. As this also applies to the poly-logarithmic complexity classes studied in this paper, we adopt a Turing machine model that has a \emph{random access} read-only input, similar to the log-time Turing machine in~\cite{barrington:jcss1990}.

A \emph{random-access Turing machine} is a multi-tape Turing machine with (1) a read-only (random access) \emph{input} of length $n+1$, (2) a fixed number of read-write \emph{working tapes}, and (3) a read-write input \emph{address-tape} of length $\lceil \log n \rceil$.

Every cell of the input as well as every cell of the address-tape contains either $0$ or $1$ with the only exception of the ($n+1$)st cell of the input, which is assumed to contain the endmark $\triangleleft$. In each step the binary number in the address-tape either defines the cell of the input that is read or if this number exceeds $n$, then the ($n+1$)st cell containing $\triangleleft$ is read.   

\begin{example}\label{ex:machine}
Let polylogCNFSAT be the restriction of the class CNFSAT (aka CNF) of satisfiable propositional formulae in conjunctive normal form to $c \leq \lceil \log n \rceil^k$ clauses, where $n$ is the length of the formula. Note that the formulae in polylogCNFSAT tend to have few clauses and many literals. We define a random-access Turing machine $M$ which decides polylogCNFSAT. The alphabet of $M$ is $\{0,1,\#,+,-\}$. The input formula is encoded in the input tape as a list of $c \leq \lceil \log n \rceil^k$ indices (binary numbers of length $\lceil \log n \rceil$), followed by $c$ clauses. For every $1 \leq i \leq c$, the $i$-th index points to the first position in the $i$-th clause. Clauses start with $\#$ and are followed by a list of literals. Positive literals start with a $+$, negative with a $-$. The $+$ or $-$ symbol of a literal is followed by the ID of the variable in binary. $M$ proceeds as follows: (1) Using binary search with the aid of the ``out of range'' response $\triangleleft$, compute $n$ and $\lceil \log n \rceil$. (2) Copy the indices to a working tape, counting the number of indices (clauses) $c$. (3) Non-deterministically guess $c$ input addresses $a_1, \ldots, a_c$, i.e., guess $c$ binary numbers of length $\lceil \log n \rceil$. (4) Using $c$ $1$-bit flags, check that each $a_1, \ldots, a_c$ address falls in the range of a different clause. (5) Check that each $a_1, \ldots, a_c$ address points to an input symbol $+$ or $-$. (6) Copy the literals pointed by $a_1, \ldots, a_c$ to a working tape, checking that there are \emph{no} complementary literals. (7) Accept if all checks hold.
\end{example}

Let $L$ be a language accepted by a random-access Turing machine $M$. Assume that for some function $f$ on the natural numbers, $M$ makes at most $O(f(n))$ steps before accepting an input of length $n$. If $M$ is deterministic, then we write $L \in \mathrm{DTIME}[f(n)]$. If $M$ is non-deterministic, then we write $L \in \mathrm{NTIME}[f(n)]$. We define the classes of deterministic and non-deterministic poly-logarithmic time computable problems as follows:
\[ \polylog = \bigcup_{k \in \mathbb{N}} \mathrm{DTIME}[\log^k n] \qquad \, \npolylog = \bigcup_{k \in \mathbb{N}} \mathrm{NTIME}[\log^k n] \]
The non-deterministic random-access Turing machine in Example~\ref{ex:machine} clearly works in polylog-time. Therefore, polylogCNFSAT $\in \npolylog$.

In order to relate our logic $\mathrm{SO}^{\mathit{plog}}$ to these Turing complexity classes we adhere to the usual conventions concerning a binary encoding of finite structures~\cite{Immerman99}. Let $\sigma = \{R^{r_1}_1, \ldots, R^{r_p}_p, c_1, \ldots, c_q\}$ be a vocabulary, and let ${\bf A}$ with $A = \{0, 1, \ldots, n-1\}$ be an ordered structure of vocabulary $\sigma$. Each relation $R_i^{\bf A} \subseteq A^{r_i}$ of $\bf A$ is encoded as a binary string $\mathrm{bin}(R^{\bf A}_i)$ of length $n^{r_i}$ where $1$ in a given position indicates that the corresponding tuple in the lexicographical ordering is in $R_i^{\textbf{A}}$.
Likewise, each constant number $c^{\bf A}_j$ is encoded as a binary string $\mathrm{bin}(c^{\bf A}_j)$ of length $\lceil \log n \rceil$. The encoding of the whole structure $\mathrm{bin}(\textbf{A})$ is simply the concatenation of the binary strings encodings its relations and constants: 
\[\mathrm{bin}(\textbf{A}) = \mathrm{bin}(R_1^{\textbf{A}}) \cdots \mathrm{bin}(R_p^{\textbf{A}}) \cdot \mathrm{bin}(c^{\bf A}_1) \cdots \mathrm{bin}(c^{\bf A}_q).\] 

The length $\hat{n} = |\mathrm{bin}(\textbf{A})|$ of this string is $n^{r_1}+\cdots+n^{r_p} + q \lceil \log n \rceil$, where $n = |A|$ denotes the size of the input structure ${\bf A}$. Note that $\log \hat{n} \in O(\lceil \log n \rceil)$, so $\mathrm{NTIME}[\log^k \hat{n}] = \mathrm{NTIME}[\log^k n]$ (analogously for $\mathrm{DTIME}$). Therefore, we will consider random-access Turing machines, where the input is the encoding $\mathrm{bin}(\textbf{A})$ of the structure \textbf{A} followed by the endmark $\triangleleft$.

In this work we also consider alternating Turing machines. An alternating Turing machine comes with a set of states $Q$ that is partitioned into subsets $Q_\exists$ and $Q_\forall$ of so-called existential and universal states. Then a configuration $c$ is accepting iff
\begin{itemize}

\item $c$ is in a final accepting state,

\item $c$ is in an existential state and there exists a next accepting configuration, or

\item $c$ is in a universal state, there exists a next configuration and all next configurations are accepting.

\end{itemize}

In analogy to our definition above we can define a \emph{random-access alternating Turing machine}. The languages accepted by such a machine $M$, which starts in an existential state and makes at most $O(f(n))$ steps before accepting an input of length $n$ with at most $m$ alternations between existential and universal states, define the complexity class $\mathrm{ATIME}[f(n),m]$. Analogously, we define the complexity class $\mathrm{ATIME}^{op}[f(n),m]$ comprising languages that are accepted by a random-access alternating Turing machine that starts in a universal state and makes at most $O(f(n))$ steps before accepting an input of length $n$ with at most $m$ alternations between universal and existential states. With this we define
\[ \tilde{\Sigma}_m^{\mathit{plog}} = \bigcup_{k \in \mathbb{N}} \mathrm{ATIME}[\log^k n,m] \quad \text{and} \quad \tilde{\Pi}_m^{\mathit{plog}} = \bigcup_{k \in \mathbb{N}} \mathrm{ATIME}^{op}[\log^k n,m] . \]

The poly-logarithmic time hierarchy is then defined as $\mathrm{PLH} = \bigcup_{m \ge 1} \tilde{\Sigma}_m^{\mathit{plog}}$. Note that $\tilde{\Sigma}_1^{\mathit{plog}} = \npolylog$ holds. 

\begin{remark}

Note that a simulation of a $\npolylog$ Turing machine $M$ by a deterministic machine $N$ requires checking all computations in the tree of computations of $M$. As $M$ works in time $({\log n})^{O(1)}$, $N$ requires time $2^{{\log n}^{O(1)}}$. This implies $\npolylog \subseteq \mathrm{DTIME}(2^{{\log n}^{O(1)}})$, which is the complexity class called quasipolynomial time of the fastest known algorithm for graph isomorphism \cite{babai:stoc2016}, which further equals the class  
$\mathrm{DTIME}({n^{{\log n}^{O(1)}}})$\footnote{This relationship appears quite natural in view of the well known relationship $\mathrm{NP} = \mathrm{NTIME}(n^{O(1)}) \subseteq \mathrm{DTIME}(2^{{n}^{O(1)}}) = \mathrm{EXPTIME}$.}.

\end{remark}

\section{Correspondence Between the Quantifier Prefix Classes of \texorpdfstring{$\mathrm{SO}^{\mathit{plog}}$}{TEXT} and the Levels of the Polylog-Time Hierarchy} \label{sec:main}

We say that a logic $\mathcal{L}$ captures the complexity class $\mathcal{K}$ iff the following holds:
\begin{itemize} 

\item For every $\mathcal{L}$-sentence $\varphi$ the language $\{\mathrm{bin}({\bf A}) \mid {\bf A} \models \varphi \}$ is in $\mathcal{K}$, and

\item For every property $\mathcal{P}$ of (binary encodings of) structures that can be decided with complexity in $\mathcal{K}$, there is a sentence $\varphi_{\mathcal{P}}$ of $\mathcal{L}$ such that ${\bf A} \models \varphi_{\mathcal{P}}$ iff $\bf A$ has the property $\mathcal{P}$. 

\end{itemize} 

We now present our main result which states that the existential fragment of $\mathrm{SO}^{\mathit{plog}}$ captures $\npolylog$.

\begin{theorem}\label{pedsoplog}

Over ordered structures with successor relation, $\mathrm{BIT}$ and constants for $\log n$, the minimum, second and maximum elements, $\Sigma^{\mathit{plog}}_1$ captures $\npolylog$.

\end{theorem}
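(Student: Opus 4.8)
The plan is to prove the two inclusions separately, establishing that $\Sigma^{\mathit{plog}}_1$ and $\npolylog$ define exactly the same Boolean queries over ordered structures.

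\paragraph{From $\Sigma^{\mathit{plog}}_1$ to $\npolylog$.} First I would show that every $\Sigma^{\mathit{plog}}_1$-sentence $\varphi = \exists X_1^{\log^{k_1}} \cdots \exists X_s^{\log^{k_s}} \psi$ defines a language in $\npolylog$, where $\psi$ is second-order quantifier free. The idea is to design a non-deterministic random-access Turing machine that, given $\mathrm{bin}(\mathbf{A})$, first computes $n$ and $\lceil \log n\rceil$ by binary search against the endmark $\triangleleft$ (as in Example~\ref{ex:machine}), then \emph{guesses} witnesses for each existentially quantified second-order variable. Since each $X_i$ has exponent $k_i$, a valid valuation is a relation of size at most $\log^{k_i} n$; each tuple consists of $r_i$ domain elements, each encodable in $\lceil \log n\rceil$ bits, so the whole guess occupies $O(\log^{k_i+1} n)$ cells and can be written down in polylog time. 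It then remains to evaluate $\psi$ deterministically on this guess. The key observation making this polylog is the restricted form of universal first-order quantification: in $\mathrm{SO}^{\mathit{plog}}$ every universal first-order quantifier appears as $\forall\bar{x}(X^{\log^k}(\bar{x})\to\cdots)$, so it ranges only over the (at most $\log^k n$) tuples stored in a guessed relation rather than over all of $A$. Thus each universal quantifier contributes only a polylog factor, existential first-order quantifiers are replaced by further non-deterministic guesses of $O(\log n)$ bits each, and the atomic formulae (including $\mathrm{BIT}$, $\leq$, $\mathrm{SUCC}$, and the relation symbols of $\sigma$) are checked by random access to the input in time $O(\log n)$. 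A careful induction on the structure of $\psi$ shows the total running time is $O(\log^c n)$ for some constant $c$ depending on $\varphi$.

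\paragraph{From $\npolylog$ to $\Sigma^{\mathit{plog}}_1$.} This is the substantive direction and the main obstacle. Given a non-deterministic random-access Turing machine $M$ deciding a language $L$ in time $t(n)=O(\log^k n)$, I would build a $\Sigma^{\mathit{plog}}_1$-sentence $\varphi_M$ that existentially quantifies second-order relations encoding an \emph{entire accepting computation} of $M$ and then asserts, using only restricted universal first-order quantification, that this computation is a legal accepting run. The crucial point is that a computation of length $t(n)$ visits at most $t(n)$ cells on each tape, the address tape has length $\lceil\log n\rceil$, and the working tapes have polylog length; hence the whole computation history has polylog size and can be coded by $\mathrm{SO}^{\mathit{plog}}$ relations of appropriate exponent. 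Time steps and tape positions are themselves polylog-bounded numbers, so I would represent them as binary numbers via Definition~\ref{def:repBinaryNumbers} and manipulate them with the arithmetic formulae $\mathrm{BSUM}_k$, $\mathrm{BMULT}_k$ and the comparison predicates $=_k$, $<_k$, $\mathrm{SUCC}_k$ developed in Section~\ref{sec:arithexampls}. The sentence guesses relations giving, for each step, the state, head positions, and tape contents; it then checks the initial configuration, the transition relation between consecutive steps, and acceptance. Consistency of the transition relation is a local condition quantified over all (polylog many) steps, so it fits the $\forall\bar{x}(X(\bar{x})\to\cdots)$ schema. The one genuinely delicate sub-step is relating the machine's \emph{random access} to the input to first-order atomic formulae over $\mathbf{A}$: when $M$ reads input address $a$, the sentence must recover which bit of $\mathrm{bin}(\mathbf{A})$ that address names and translate it into an assertion about the relations $R_i^{\mathbf{A}}$ and constants $c_j^{\mathbf{A}}$ of the structure. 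This requires expressing, inside $\Sigma^{\mathit{plog}}_1$, the decoding of a position in $\mathrm{bin}(\mathbf{A})$ into a tuple index and the corresponding membership query, which is exactly where the bounded arithmetic of Section~\ref{sec:arithexampls} (together with $\mathrm{BIT}$ and the order) is indispensable.

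\paragraph{Where the difficulty lies.} The main obstacle is verifying that every assertion about the guessed computation can be phrased within the \emph{restricted} first-order fragment of $\mathrm{SO}^{\mathit{plog}}$, since unbounded universal quantification over the domain $A$ is \emph{not} available in the existential fragment (cf.\ Remark~\ref{r1}). Whenever I need to say ``for all time steps'' or ``for all tape positions,'' I must ensure the range is a guessed polylog-size relation (such as an index relation $I$ with $\mathrm{DEF}_k(I)$), not all of $A$; and whenever I need a universal statement about the structure (e.g., that no pair of addresses collides), I must bound it against a guessed relation as well. Provided this discipline is maintained, the transition-consistency and address-decoding conditions all reduce to the bounded arithmetic and order predicates already shown to be $\Sigma^{\mathit{plog}}_1$-definable, and the resulting $\varphi_M$ lies in $\Sigma^{\mathit{plog}}_1$. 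Combining both directions yields that $\Sigma^{\mathit{plog}}_1$ captures $\npolylog$ over the ordered structures under consideration.
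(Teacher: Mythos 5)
Your proposal is correct and follows essentially the same strategy as the paper's proof: one direction by guessing the polylog-size second-order witnesses and verifying the quantifier-free part by structural induction (with the restricted universal quantifier ranging only over guessed relations as the key point), and the other by existentially quantifying relations that encode the entire polylog-length computation, bounding every universal quantification by $\mathrm{DEF}_k$-defined index relations, and using the bounded arithmetic of Section~\ref{sec:arithexampls} to decode input addresses into membership assertions about the $R_i^{\mathbf{A}}$ and bits of the $c_j^{\mathbf{A}}$. The only cosmetic difference is that you would represent time steps and tape positions as binary-number relations per Definition~\ref{def:repBinaryNumbers}, whereas the paper models them directly as $k$-tuples over $\{0, \ldots, \lceil \log n \rceil - 1\}$ ordered via $\mathrm{SUCC}_k$ and $\leq_k$, reserving the binary encoding for the address-decoding arithmetic; both choices work equally well.
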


\begin{proof}
\ \textbf{Part a.} We first show $\Sigma^{\mathit{plog}}_1 \subseteq NPolyLogTime$, i.e. a non-deterministic random access Turing Machine \textbf{M} can evaluate every sentence $\phi$ in $\Sigma^{\mathit{plog}}_1$ in poly-logarithmic time. 

Let $\phi = \exists X_1^{r_1,\log^{k_1}} \dots \exists X_m^{r_m,\log^{k_m}} \varphi$, where $\varphi$ is a first-order formula with the restrictions given in the definition of $\mathrm{SO}^{\mathit{plog}}$. Given a $\sigma$-structure \textbf{A} with $|dom(\textbf{A})|=n$, \textbf{M} first guesses values for $X_1^{r_1,\log^{k_1}},\dots, X_m^{r_m,\log^k_m}$ and then checks if $\varphi$ holds. As $val(X_i^{r_i,\log^{k_i}})$ is a relation of arity $r_i$ with at most $\log^{k_i} n$ tuples, \textbf{M} has to guess $r_i*{(\log(n))}^{k_i}$ values in $A$, each encoded in $\left \lceil log(n)\right \rceil$ bits. Thus, the machine has to generate $E = \sum\limits_{i=1}^{m}\left ( r_i*{(\log(n))}^{k_i+1}\right )$ bits in total. As $E \in O(\lceil \log n \rceil^{k_{\max} + 1})$, the generation of the values $val(X_i^{r_i,\log^{k_i}})$ requires time in $O(\lceil \log n \rceil^{k^\prime})$ for some $k^\prime$.

The fact that \textbf{M} can check the validity of $\varphi$ in poly-logarithmic time, can be shown by structural induction on the formulae.   
\begin{enumerate}[i.]
\item If $\varphi$ is an existential first-order formula, then \textbf{M} can clearly check $\varphi$ in poly-logarithmic time. Note that if $\varphi$ is an atomic formulae, it only takes time $O(\log n)$ to \textbf{M} to decide whether $\varphi$ holds or not. For reference see proof of~\cite[Theorem 5.30]{Immerman99} among others.
\item If $\varphi$ is of the form $X^{r,\log^k}(t_1, \ldots, t_r)$ or $\neg X^{r,\log^k}(t_1, \ldots, t_r)$, then \textbf{M} can simply check whether the tuple $(t_1, \ldots, t_r)$ belongs to the relation assigned to $X^{r,\log^k}$. Since this relation is of maximum size $\lceil\log n\rceil^k$, then this takes poly-logarithmic time.
\item If $\varphi$ is of the form $\psi_1 \vee \psi_2$ (or $\varphi = \psi_1 \wedge \psi_2$), then \textbf{M} has to check if $\psi_1$ or $\psi_2$ (or both, respectively) holds, which requires at most the time for checking both $\psi_1$ and $\psi_2$. Thus, by the inductive hypothesis the checking of $\varphi$ can be done in poly-logarithmic time.
\item If $\varphi$ is of the form $\forall \bar{x} (X^{r,\log^k}(\bar{x}) \rightarrow \psi)$, then $\textbf{M}$ has already guessed a value for $X^{r,\log^{k}}$, it remains to check whether $\{ \bar{a} / \bar{x} \} . \psi$ for every element $\bar{a}$ in this relation. Since there are maximum $\lceil \log^k n \rceil$ tuples in the valuation of $X^{r,\log^{k}}$, by the inductive hypothesis the whole process takes poly-logarithmic time.
\item If $\varphi$ is of the form $\exists x \psi$, \textbf{M} first guesses $x$, for which at most $\lceil \log n \rceil$ steps are required, and then checks that $\psi$ holds, which by the inductive hypothesis can be done in poly-logarithmic time.  
\end{enumerate}

\noindent
\textbf{Part b.} Next we show $\npolylog  \subseteq \Sigma^{\mathit{plog}}_1$. For this let \textbf{M} be a non-deterministic random access Turing Machine that accepts a $\sigma$-structure \textbf{A} in $O(\log^k n)$ steps, where $|\mathit{dom}({\bf A})| = n$. We assume a set of states $Q = \{ q_0,\dots,q_f \}$, where $q_0$ is the initial state, $q_f$ is the only final state. In the initial state, the tape heads are in the left-most position, the working tape is empty and the index-tape is filled with zeros.

As \textbf{M} runs in time $\left\lceil \log n\right\rceil^k$, it visits at most $\left\lceil \log n\right\rceil^k$ cells in the working tape. Thus, we can model positions on the working tape and time by $k$-tuples $\bar{p}$ and $\bar{t}$, respectively. Analogously, the length of the index tape is bound by $\left\lceil \log n\right\rceil^{k'} $, so we can model the positions in the index tape by $k'$-tuples $\bar{d}$. We use auxiliary relations $I$ and $I^\prime$ to capture $k$-tuples and $k^\prime$-tuples, respectively, over $\{ 0 ,\dots, \lceil \log n \rceil \}$.  We define those relations using $\mathrm{DEF_k}(I)$ and $\mathrm{DEF_{k'}}(I')$ in the same way as in (\ref{arith3}).
As \textbf{M} works non-deterministically, it makes a choice in every step. Without loss of generality we can assume that the choices are always binary, which we capture by a relation $C$ of arity $k + k^\prime + 1$; $C(\bar{t},\bar{d},c)$ expresses that at time $\bar{t}$ the position $\bar{d}$ in the index-tape has the value $c \in \{0,1\}$, which denotes the two choices.

In order to construct a sentence in $\Sigma^{\mathit{plog}}_1$ that is satisfied by the structure \textbf{A} iff the input $\mathrm{bin}(\textbf{A})$ is accepted by \textbf{M} we first describe logically the operation of the random access Turing machine \textbf{M}, then express the acceptance of $\mathrm{bin}(\textbf{A})$ for at least one computation path.

We use predicates $T_0, T_1, T_2$, where $T_i(\bar{t},\bar{p})$ indicates that at time $\bar{t}$ the working tape at position $\bar{p}$ contains $i$ for $i \in \{ 0,1 \}$ and the blank symbol for $i=2$, respectively. The following formulae express that the working tape is initially empty, and at any time a cell can only contain one of the three possible symbols:
\begin{align}
& \forall \bar{p} \; I(\bar{p}) \rightarrow T_2(\bar{0},\bar{p}) \notag \\
& \forall \bar{t} I(\bar{t}) \rightarrow \forall \bar{p} \; I(\bar{p}) \rightarrow ( T_0(\bar{t},\bar{p}) \rightarrow \neg T_1(\bar{t},\bar{p}) \wedge \neg T_2(\bar{t},\bar{p}) ) \notag \\
& \forall \bar{t} I(\bar{t}) \rightarrow \forall \bar{p} \; I(\bar{p}) \rightarrow ( T_1(\bar{t},\bar{p}) \rightarrow \neg T_0(\bar{t},\bar{p}) \wedge \neg T_2(\bar{t},\bar{p}) ) \notag \\
& \forall \bar{t} I(\bar{t}) \rightarrow \forall \bar{p} \; I(\bar{p}) \rightarrow ( T_2(\bar{t},\bar{p}) \rightarrow \neg T_0(\bar{t},\bar{p}) \wedge \neg T_1(\bar{t},\bar{p}) ) \label{eq-2}
\end{align}

Then we use a predicate $H$ with $H(\bar{t},\bar{p})$ expressing that at time $\bar{t}$ the head of the working tape is in position $\bar{p}$. This gives rise to the following formulae:
\begin{gather}
H(\bar{0},\bar{0}) \qquad\qquad\qquad \forall \bar{t} I(\bar{t}) \rightarrow \exists \bar{p} ( I(\bar{p}) \wedge H(\bar{t},\bar{p}) ) \notag \\
\forall \bar{t} I(\bar{t}) \rightarrow \forall \bar{p} \; I(\bar{p}) \rightarrow ( H(\bar{t},\bar{p}) \rightarrow \forall \bar{p}^\prime ( I(\bar{p}^\prime) \rightarrow H(\bar{t},\bar{p}^\prime) \rightarrow \bar{p} = \bar{p}^\prime ) ) \label{eq-3}
\end{gather}

Predicates $S_i$ for $i = 1 ,\dots, f$ are used to express that at time $\bar{t}$ the machine \textbf{M} is in the state $q_i \in Q$, which using (\ref{arith1}) gives rise to the formulae
\begin{gather}
S_0(\bar{0}) \wedge \forall \bar{t}\big( I(\bar{t}) \rightarrow ( \bar{t} \neq \bar{0} \rightarrow \bigvee_{0 \le i \le f} S_i(\bar{t}) )\big) \wedge 
\exists \bar{t}_f \Big( \forall \bar{t} \big( I(\bar{t}) \rightarrow \bar{t} \le_k \bar{t}_f \big)  \wedge S_f(\bar{t}_f)\Big) \notag \\
\bigwedge_{0 \le i \le f} \; \forall \bar{t} \Big(I(\bar{t}) \rightarrow \big(S_i(\bar{t}) \rightarrow \bigwedge_{0 \le j \le f, j \neq i} \neg S_j(\bar{t})\big)\Big) \label{eq-4}
\end{gather}

The following formulae exploiting (\ref{arith2}) describe the behaviour of \textbf{M} moving in every step its working-tape head either to the right, to the left or not at all (which actually depends on the value for $c$ in $C(\bar{t},\bar{d},c)$):
\begin{align}
& \forall \bar{t} I(\bar{t}) \wedge \bar{t} \neq \bar{0} \wedge H(\bar{t},\bar{0}) \rightarrow
\exists \bar{t}^\prime, \bar{d}^\prime \; ( \mathrm{SUCC}_k(\bar{t}^\prime,\bar{t}) \wedge \notag \\
& \hspace*{3cm} \mathrm{SUCC}_k(\bar{0},\bar{d}^\prime) \wedge ( H(\bar{t}^\prime,\bar{0}) \vee H(\bar{t}^\prime,\bar{d}^\prime) ) ) \notag \\
& \forall \bar{t}, \bar{d} I(\bar{t}) \wedge \bar{t} \neq \bar{0} \wedge I^\prime(\bar{d}) \wedge H(\bar{t},\bar{d}) \rightarrow ( \bar{d} \neq \bar{0} \wedge \bar{d} \neq \mathit{last} \rightarrow \notag \\
& \hspace*{1.5cm} \exists \bar{t}^\prime, \bar{d}_1 , \bar{d}_2 \; ( \mathrm{SUCC}_k(\bar{t}^\prime,\bar{t}) \wedge \mathrm{SUCC}_k(\bar{d}_1,\bar{d}) \wedge \mathrm{SUCC}_k(\bar{d},\bar{d}_2) \wedge \notag \\
& \hspace*{3cm}  ( H(\bar{t}^\prime,\bar{d}_1) \vee H(\bar{t}^\prime,\bar{d}) \vee H(\bar{t}^\prime,\bar{d}_2) ) ) \notag \\
& \forall \bar{t} I(\bar{t}) \wedge \bar{t} \neq \bar{0} \wedge H(\bar{t},\mathit{last}) \rightarrow
\exists \bar{t}^\prime, \bar{d}^\prime \; ( \mathrm{SUCC}_k(\bar{t}^\prime,\bar{t}) \wedge \notag \\
& \hspace*{3cm} \mathrm{SUCC}_k(\bar{d}^\prime,\mathit{last}) \wedge ( H(\bar{t}^\prime,\mathit{last}) \vee H(\bar{t}^\prime,\bar{d}^\prime) ) ) \label{eq-5}
\end{align}
    
Furthermore, we use predicates $L_i$ ($i \in \{ 0, 1, 2 \}$) to describe that \textbf{M} reads at time $\bar{t}$ the value $i$ (for $i \in \{ 0, 1 \}$) or $\triangleleft$ for $i=2$, respectively. As exactly one of these values is read, we obtain the following formulae:
\begin{alignat}{2}
& \forall \bar{t} \; I(\bar{t}) \rightarrow ( L_0(\bar{t}) \vee L_1(\bar{t}) \vee L_2(\bar{t}) ) \quad &
& \forall \bar{t} \; I(\bar{t}) \rightarrow ( L_0(\bar{t}) \rightarrow \neg L_1(\bar{t}) \wedge \neg L_2(\bar{t}) ) \notag \\
& \forall \bar{t} \; I(\bar{t}) \rightarrow ( L_1(\bar{t}) \rightarrow \neg L_0(\bar{t}) \wedge \neg L_2(\bar{t}) ) \quad &
& \forall \bar{t} \; I(\bar{t}) \rightarrow ( L_2(\bar{t}) \rightarrow \neg L_0(\bar{t}) \wedge \neg L_1(\bar{t}) ) \label{eq-6}
\end{alignat}

The conjunction of the formulae in (\ref{eq-2})-(\ref{eq-6}) with all second-order variables existentially quantified merely describes the operation of the Turing machine \textbf{M}. If \textbf{M} accepts the input $\mathrm{bin}(\textbf{A})$ for at least one computation path, i.e. for one sequence of the choices, we can assume without loss of generality that if at time $\bar{t}$ with $\bar{d}$ on the index-tape the bit $c$ indicating the choice equals the value read from the input, then this will lead to acceptance. Therefore, in order to complete the construction of the required formula in $\Sigma^{\mathit{plog}}_1$ we need to express this condition in our logic.

The bit \textbf{M} reads from the input corresponds to the binary encoding of the relations and constants in the structure $\mathbf{A}$. In order to detect, which tuple or which constant is actually read, we require several auxiliary predicates. We use predicates $M_i$ ($i=0, \dots, k^\prime$) to represent the numbers $n^i$, which leads to the formulae
\begin{gather}
\mathrm{BNUM}_{k^\prime}(M_0,1,I^\prime),\;\mathrm{BNUM}_{k^\prime}(M_1,\max,I^\prime) \;\text{and }
 \mathrm{BMULT}_{k^\prime}(M_1,M_{i-1},M_i) \;\text{for}\; i \ge 2 \label{eq-7}
\end{gather}

For this we exploit the definitions in (\ref{arith7}) and (\ref{arith9}). The latter formula is not in QNF. Nevertheless, we have already shown at the end of Section~\ref{sec:examples} how to turn such a formula into a formula in QNF in $\Sigma^{\mathit{plog}}_1$. The same applies to several of the following formulae.

Next we use relations $P_i$ ($i=0,\dots,p+1$) representing the position in $\mathrm{bin}(\textbf{A})$, where the encoding of $R_{i+1}^{\mathbf{A}}$ for the relation $R_i$ starts (for $0 \le i \le p-1$), the encoding of where the constants $c_j$ ($j=1,\dots,q$) starts (for $i=p$), and finally representing the length of $\mathrm{bin}(\textbf{A})$ (for $i=p+1$). As each constant requires $\lceil \log n \rceil$ bits we further use auxiliary relations $N_i$ (for $i=1,\dots,q$) to represent $i \cdot \lceil \log n \rceil$, that is required to detect which constant is read. This leads to the following formulae (exploiting (\ref{arith7}) and (\ref{arith8})):
\begin{align}
& \mathrm{BNUM}_{k^\prime}(P_0,0,I^\prime) \; , \; \bigwedge_{1 \le i \le p} \mathrm{BSUM}_{k^\prime}(P_{i-1},M_{r_i},P_i) \; \text{and} \; \mathrm{BSUM}_{k^\prime}(P_p,N_q,P_{p+1}) \notag \\
& \mathrm{BNUM}_{k^\prime}(N_1,\mathit{logn},I^\prime) \quad \text{and} \; \bigwedge_{1 \le i \le q} \mathrm{BSUM}_{k^\prime}(N_{i-1},N_1,N_i) \label{eq-8}
\end{align}

Finally, we can express the acceptance condition linking the relation $C$ to the input $\mathrm{bin}(\textbf{A})$. In order to ease the representation we use for fixed $\bar{t}$ the shortcut $C_{\bar{t}}$ with $C_{\bar{t}}(\bar{d},c) \leftrightarrow C(\bar{t},\bar{d},c)$. Likewise we use shortcuts with subscript $\bar{t}$ for additional auxiliary predicates $D_i$ ($i=0,\dots,p$), $Q_i$, $Q_i^\prime$, $Q_i^{\prime\prime}$ and $Q_i^{\prime\prime\prime}$ ($i=1,\dots,r_{max}$) which we need for arithmetic operations on the length of $\mathrm{bin}(\textbf{A})$, which is represented by $P_{p+1}$. We also use $\le_{k^\prime}(X,Y,I)$ as shortcut for $<_{k^\prime}(X,Y,I) \vee =_{k^\prime}(X,Y,I)$ defined in~(\ref{arith5}) and~(\ref{arith6}).

For fixed $\bar{t}$ with $I(\bar{t})$ the relation $C_{\bar{t}}$ represents a position in the bitstring $\mathrm{bin}(\textbf{A})$, which is either at the end, within the substring encoding the constants $c_j^{\mathbf{A}}$, or within the substring encoding the relation $R_i^{\mathbf{A}}$. The following three formulae (using (\ref{arith6}), (\ref{arith7}), (\ref{arith8}), and (\ref{arith11})) with fixed $\bar{t}$ correspond to these cases:
\begin{align}
    & <_{k^\prime}(P_{p+1},C_{\bar{t}},I^\prime) \rightarrow L_2(\bar{t}) \notag \\
    & <_{k^\prime}(P_p,C_{\bar{t}},I^\prime) \wedge \le_{k^\prime}(C_{\bar{t}},P_{p+1},I^\prime) \wedge \mathrm{BSUM}_{k^\prime}(P_p,D_{0,\bar{t}},C_{\bar{t}}) \wedge \notag \\
      & \mathrm{BDIV}_{k^\prime}(D_{0,\bar{t}}, N_1, Q_{1,\bar{t}}, Q_{1,\bar{t}}^\prime)\to\exists xy \big( \mathrm{BNUM}_{k^\prime}(Q_{1,\bar{t}},x) \wedge \mathrm{BNUM}_{k^\prime}(Q_{1,\bar{t}}^\prime,y)
      \wedge \notag\\
      & \hspace*{5.6cm}\mathrm{BIT}(c_x,y) {\leftrightarrow} L_1(\bar{t})\big)
    \notag\\
    & \bigwedge_{1 \le i \le p} <_{k^\prime}(P_{i-1},C_{\bar{t}},I^\prime) \wedge \le_{k^\prime}(C_{\bar{t}},P_i,I^\prime) \wedge \mathrm{BSUM}_{k^\prime}(P_{i-1},D_{i,\bar{t}},C_{\bar{t}}) \rightarrow \notag \\
    & \quad \exists \bar{x} \Bigg( \bigwedge_{1 \le j \le r_i} \Big( \mathrm{BNUM}_{k^\prime}(Q_{j,\bar{t}}^{\prime\prime\prime},x_j)
    \wedge \mathrm{BDIV}_{k^\prime}(D_{i,\bar{t}}, M_j, Q_{j,\bar{t}}, Q_{j,\bar{t}}^\prime)\notag\\
    & \quad \quad \quad\quad \quad \quad\quad \quad \quad\quad \quad \quad\wedge \mathrm{BDIV}_{k^\prime}(Q_{j,\bar{t}}, M_1, Q_{j,\bar{t}}^{\prime\prime}, Q_{j,\bar{t}}^{\prime\prime\prime}) \Big)\wedge
    \notag\\
    & \quad\quad\quad\quad \Big(\big(
    L_1(\bar{t}) ) \to  R_i(\bar{x})
    \big)\vee\big(
    L_0(\bar{t}) ) \to  \neg R_i(\bar{x})
    \big) \Big) \Bigg)
    \label{eq-9}
\end{align}
Note that in the second case $Q_{1,\bar{t}}$ represents an index $j \in \{ 1,\dots,q \}$ and $Q_{1,\bar{t}}^\prime$ represents the read bit of the constant $c_j^{\mathbf{A}}$ in $\mathrm{bin}(\textbf{A})$. In the third case $D_{i,\bar{t}}$ represents the read position $d$ in the encoding on relation $R_i^{\mathbf{A}}$, which represents a particular tuple, for which we use $Q_{j,\bar{t}}^{\prime\prime\prime}$ to determine every value of the tuple and depending of the read in $L_i(\bar{t})$ check if that particular tuple is in the relation or not.

Finally, the sentence $\Psi$ describing acceptance by \textbf{M} results from building the conjunction of the formulae in (\ref{eq-2})-(\ref{eq-9}), expanding the macros as shown in Section \ref{sec:examples}, which brings in additional second-order variables, and existentially quantifying all second-order variables. Due to our construction we have $\mathbf{A} \models \Psi$ iff $\mathbf{A}$ is accepted by \textbf{M}.\end{proof}

Likewise, we can show that the universal fragment of $\mathrm{SO}^{\mathit{plog}}$ captures the (first) level $\tilde{\Pi}^{\mathit{plog}}_1$ of the polylog-time hierarchy.

\begin{theorem}\label{pedsoplog2}

Over ordered structures with successor relation, $\mathrm{BIT}$ and constants for $\log n$, the minimum, second and maximum elements, $\Pi^{\mathit{plog}}_1$ captures $\tilde{\Pi}^{\mathit{plog}}_1$.

\end{theorem}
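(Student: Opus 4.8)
The plan is to obtain this statement as the dual of Theorem~\ref{pedsoplog}, combining a complexity-theoretic complementation with a syntactic duality between the two fragments. At the machine level, a random-access alternating machine that starts in a universal state and runs through a single universal phase is exactly the complement of a non-deterministic (purely existential) one: swapping the accepting/rejecting verdict at the leaves turns an $\mathrm{ATIME}^{op}[\log^k n,1]$ computation into an $\mathrm{NTIME}[\log^k n]$ computation and conversely, so that $\tilde{\Pi}^{\mathit{plog}}_1 = \mathrm{co}\text{-}\npolylog$. At the logic level I would track how negation moves a sentence between the fragments, paying attention to the fact that $\mathrm{SO}^{\mathit{plog}}$ treats the two first-order quantifiers asymmetrically.

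First I would settle the inclusion $\tilde{\Pi}^{\mathit{plog}}_1 \subseteq \Pi^{\mathit{plog}}_1$, which is the clean direction. Given $\mathcal{P} \in \tilde{\Pi}^{\mathit{plog}}_1$, its complement $\overline{\mathcal{P}}$ lies in $\npolylog$, so by Theorem~\ref{pedsoplog} there is a $\Sigma^{\mathit{plog}}_1$ sentence $\varphi = \exists\bar{X}\,\psi$ (with $\psi$ free of second-order quantifiers) defining $\overline{\mathcal{P}}$, whence $\neg\varphi$ defines $\mathcal{P}$. It then remains to rewrite $\neg\varphi \equiv \forall\bar{X}\,\neg\psi$ as a $\Pi^{\mathit{plog}}_1$ sentence by pushing the negation through $\psi$: conjunction and disjunction dualise; $\neg X^{r,\log^k}(\bar{t})$ is already a literal of clause~(ii); a negated bounded universal subformula $\neg\forall\bar{x}(X(\bar{x})\to\alpha)$ becomes the admissible existential $\exists\bar{x}(X(\bar{x})\wedge\neg\alpha)$; and the only delicate case, $\neg\exists x\,\alpha \equiv \forall x\,\neg\alpha$, produces an unrestricted universal first-order quantifier that I would eliminate via Remark~\ref{r1}, replacing it with $\forall Z^{\log^0}\forall x(Z^{\log^0}(x)\to\neg\alpha)$. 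Crucially this introduces only \emph{universal} second-order quantifiers, so $\neg\varphi$ carries no existential second-order quantifier, and Lemma~\ref{lem-snf} normalises it into QNF with an all-universal prefix, i.e.\ into $\Pi^{\mathit{plog}}_1$.

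For the converse inclusion $\Pi^{\mathit{plog}}_1 \subseteq \tilde{\Pi}^{\mathit{plog}}_1$ I would mirror Part~a of the proof of Theorem~\ref{pedsoplog} with a random-access alternating machine that begins in a universal state. Given $\varphi = \forall\bar{X}\,\psi \in \Pi^{\mathit{plog}}_1$, the machine universally branches over all candidate valuations of the polylog-size relations $\bar{X}$ (each encodable in $O(\log^{k'} n)$ bits, exactly as computed there) and then must evaluate the second-order quantifier-free matrix $\psi$. The deterministic part of this evaluation is as before: every bounded universal first-order subformula ranges over a relation of polylog size and is checked by iterating over its at most $\lceil\log n\rceil^k$ tuples, while atomic and $\mathrm{BIT}$ tests cost $O(\log n)$ each, using the arithmetic of Section~\ref{sec:arithexampls}.

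The main obstacle is precisely this final evaluation, because of the asymmetry already stressed in the introduction. Whereas $\Sigma^{\mathit{plog}}_1$ is perfectly matched to non-deterministic computation -- free existential first-order quantifiers correspond to existential guesses and bounded universal ones to deterministic loops -- a $\Pi^{\mathit{plog}}_1$ matrix may still contain \emph{free} existential first-order quantifiers, which a purely universal phase cannot resolve by guessing; naively, negating such a sentence lands in $\Sigma^{\mathit{plog}}_2$ rather than $\Sigma^{\mathit{plog}}_1$, which would only give $\Pi^{\mathit{plog}}_1 \subseteq \tilde{\Pi}^{\mathit{plog}}_2$. The real work of the proof is therefore to argue that, once the universal second-order block is fixed, the residual existential first-order search can be carried out within the deterministic polylog verification -- in the same bookkeeping spirit as the bounded arithmetic of Section~\ref{sec:arithexampls} -- so that no additional alternation is incurred and the computation stays inside a single universal phase, and hence inside $\tilde{\Pi}^{\mathit{plog}}_1$. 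Showing that the first-order asymmetry does not push the construction up one level is, I expect, the crux of the argument.
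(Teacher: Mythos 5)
Your high-level plan (dualise Theorem~\ref{pedsoplog} via machine-level complementation, $\tilde{\Pi}^{\mathit{plog}}_1 = \text{co-}\npolylog$) is natural, but it is \emph{not} the paper's route, and the direction $\tilde{\Pi}^{\mathit{plog}}_1 \subseteq \Pi^{\mathit{plog}}_1$ contains a genuine gap. The paper never negates a formula: it re-runs the entire Part-b encoding of Theorem~\ref{pedsoplog} directly for the purely universal machine, turning the existential second-order quantifiers into universal ones (all computation paths must accept). Your negate-and-push argument breaks precisely at the normalisation step. The matrix $\psi$ supplied by Theorem~\ref{pedsoplog} necessarily contains bounded universal quantifiers with existentials nested inside them --- e.g.\ $\forall \bar{t}\, (I(\bar{t}) \rightarrow \exists \bar{p}\,(I(\bar{p}) \wedge H(\bar{t},\bar{p})))$ in (\ref{eq-3}) --- so after pushing $\neg$ through and applying Remark~\ref{r1} you obtain subformulae of the shape $\exists \bar{t}\,\big(I(\bar{t}) \wedge \forall Z^{\log^0} \forall \bar{p}\,(Z(\bar{p}) \rightarrow \cdots)\big)$, with a universal second-order quantifier \emph{under} a first-order existential. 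To hoist $\forall Z^{\log^0}$ past $\exists \bar{t}$, Lemma~\ref{lem-snf} uses the equivalence $\exists x\, \forall Y^{\log^k} \chi \equiv \exists X^{\log^0} \forall Y^{\log^k} \exists x\, (X^{\log^0}(x) \wedge \chi)$, which introduces a \emph{leading existential} second-order quantifier; iterating it over the nesting produces a genuinely alternating prefix, not an all-universal one. So your claim that Lemma~\ref{lem-snf} ``normalises it into QNF with an all-universal prefix'' is false, and the sentence you build lands in a higher prefix class than $\Pi^{\mathit{plog}}_1$. This is not incidental: the authors themselves note in the Conclusions that $\mathrm{SO}^{\mathit{plog}}$ is not closed under negation and that characterising the co-classes is open --- which is exactly the syntactic closure your complementation argument presupposes.

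For the converse inclusion you follow the paper's construction (universal branching over all valuations of the second-order block, then checking the matrix), and you rightly identify the free existential first-order quantifiers in the matrix as the crux --- but you leave the crux unresolved, and the resolution you gesture at cannot work as stated. Carrying out ``the residual existential first-order search within the deterministic polylog verification'' requires scanning up to $n$ domain elements, which costs linear rather than poly-logarithmic time, and no bookkeeping in the spirit of Section~\ref{sec:arithexampls} avoids this: even $\exists x\, R(x,c)$ (a legitimate $\Pi^{\mathit{plog}}_1$ sentence after prepending a dummy universal second-order quantifier) cannot be decided by a purely universal polylog-time machine, since on an input with $R = \emptyset$ some rejecting branch inspects only polylogarithmically many input bits, and flipping one unread bit of the encoding of $R$ yields a positive instance on which that same branch still rejects. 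So the specific escape route you propose is a dead end; whatever saves this direction must come from elsewhere. (For comparison, the paper's own sketch passes over this point silently, asserting the matrix check is ``the same as'' in Theorem~\ref{pedsoplog} Part~a --- where $\exists x$ was resolved by a nondeterministic guess, unavailable in a purely universal phase.) In sum, both halves of your proposal have concrete holes: the first relies on a prenexing step that provably changes the prefix class, and the second names the right obstacle but offers a repair that fails.
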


\begin{proof}(sketch)
In order to show $\Pi^{\mathit{plog}}_1 \subseteq \tilde{\Pi}^{\mathit{plog}}_1$ we proceed in the same way as in the proof of Theorem \ref{pedsoplog}[Part a] with the only difference that all states are universal. Let $\phi = \forall X_1^{r_1,\log^{k_1}} \dots \exists X_m^{r_m,\log^{k_m}} \varphi$, where $\varphi$ is a first-order formula with the restrictions given in the definition of $\mathrm{SO}^{\mathit{plog}}$. We first determine all possible values for the second-order variables $X_i^{r_i,\log^{k_i}}$. Any combination of such values determines a branch in the computation tree of \textbf{M}, and for each such branch the machine has to checks $\varphi$. The argument that these checks can be done in poly-logarithmic time is the same as in the proof of Theorem \ref{pedsoplog}. Then by definition of the complexity classes $\mathrm{ATIME}^{op}[\log^k n,m]$ and the definition of acceptance for alternating Turing machines the machine \textbf{M} evaluates $\phi$ in poly-logarithmic time.

In order to show the inverse, i.e. $\tilde{\Pi}^{\mathit{plog}}_1 \subseteq \Pi^{\mathit{plog}}_1$, we exploit that the given random access alternating Turing machine has only universal states and thus all branches in its computation tree must lead to an accepting state. Consequently, the same construction of a formula $\phi$ as in the proof of Theorem \ref{pedsoplog}[Part b] can be used with the only difference that all second-order existential quantifiers have to be turned into universal ones. Then the result follows in the same way as in the proof of Theorem~\ref{pedsoplog}.
\end{proof}

Finally, we get that there is a one-to-one correspondence between the expressive power of the quantifier prefix classes of $\mathrm{SO}^{\mathit{plog}}$ and the levels of the polylog-time hierarchy. 

\begin{theorem}\label{pedsoplog3}

Over ordered structures with successor relation, $\mathrm{BIT}$ and constants for $\log n$, the minimum, second and maximum elements, $\Sigma^{\mathit{plog}}_m$ captures $\tilde{\Sigma}^{\mathit{plog}}_m$ and $\Pi^{\mathit{plog}}_m$ captures $\tilde{\Pi}^{\mathit{plog}}_m$ for all $m \ge 1$.

\end{theorem}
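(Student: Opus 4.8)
The plan is to prove both correspondences simultaneously by induction on $m$, mirroring Stockmeyer's lift of Fagin's theorem to the polynomial hierarchy. The base case $m=1$ is exactly Theorems~\ref{pedsoplog} and~\ref{pedsoplog2}, which give that $\Sigma^{\mathit{plog}}_1$ captures $\tilde{\Sigma}^{\mathit{plog}}_1 = \npolylog$ and that $\Pi^{\mathit{plog}}_1$ captures $\tilde{\Pi}^{\mathit{plog}}_1$. For the inductive step I would exploit the syntactic decomposition $\Sigma^{\mathit{plog}}_m = \exists \bar{X}\,\Pi^{\mathit{plog}}_{m-1}$ and $\Pi^{\mathit{plog}}_m = \forall \bar{X}\,\Sigma^{\mathit{plog}}_{m-1}$ obtained by peeling off the leading block of like quantifiers granted by the QNF of Lemma~\ref{lem-snf}, and match it on the machine side with the phase structure of a random-access alternating Turing machine: a run with at most $m$ alternations starting in an existential state consists of an initial existential phase followed by a computation that, from the first universal configuration onward, is a $\tilde{\Pi}^{\mathit{plog}}_{m-1}$ computation. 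Since the two halves of the statement are exchanged by swapping $\exists/\forall$ and $\tilde{\Sigma}/\tilde{\Pi}$ (equivalently, swapping $\mathrm{ATIME}$ and $\mathrm{ATIME}^{op}$), I would carry out the $\Sigma$/$\tilde{\Sigma}$ case and obtain the $\Pi$/$\tilde{\Pi}$ case by the symmetric argument, exactly as Theorem~\ref{pedsoplog2} was obtained from Theorem~\ref{pedsoplog}.

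For the inclusion $\Sigma^{\mathit{plog}}_m \subseteq \tilde{\Sigma}^{\mathit{plog}}_m$, given $\phi = \exists \bar{X}\,\psi$ with $\psi \in \Pi^{\mathit{plog}}_{m-1}$, I would build an alternating machine that, while in existential states, guesses the relations $\bar{X}$ bit by bit and writes them on a working tape; as in Theorem~\ref{pedsoplog}[Part a] each relation has polylog size, so this stays in the initial existential phase and costs only polylog time. The machine then switches to a universal state and runs the $\tilde{\Pi}^{\mathit{plog}}_{m-1}$ machine furnished by the induction hypothesis for $\psi$, reading the guessed $\bar{X}$ from the working tape as an extended input. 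This prepends exactly one alternation to the at most $m-1$ alternations of the inner machine, giving a run with at most $m$ alternations starting existentially and total polylog time, so $\phi \in \tilde{\Sigma}^{\mathit{plog}}_m$.

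For the converse $\tilde{\Sigma}^{\mathit{plog}}_m \subseteq \Sigma^{\mathit{plog}}_m$, given an alternating machine $M$ with at most $m$ alternations starting existentially, I would split its computation at the first switch to a universal state. Reusing the encoding of Theorem~\ref{pedsoplog}[Part b], I would existentially quantify the relations $C, T_i, H, S_i, L_i$ \emph{restricted to the initial existential phase}, together with a constant number of polylog-size relations encoding the configuration $c^\ast$ reached at the first universal state, and assert by a part free of second-order quantifiers that these relations describe a valid existential-phase run from the initial configuration to $c^\ast$. The remainder, namely that $M$ accepts when started in the universal configuration $c^\ast$, is a $\tilde{\Pi}^{\mathit{plog}}_{m-1}$ condition, so by the induction hypothesis it is defined by a $\Pi^{\mathit{plog}}_{m-1}$ formula with $c^\ast$ threaded through its free second-order variables. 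Prefixing the existential block for the phase-one relations and for $c^\ast$ then yields a formula in $\Sigma^{\mathit{plog}}_m$.

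The main obstacle I anticipate is not the quantifier bookkeeping, since the number of phases is the constant $m$ and the alternation count therefore comes out exactly right, but rather making the base-case constructions \emph{parametric in a starting configuration}. The proofs of Theorems~\ref{pedsoplog} and~\ref{pedsoplog2} hard-wire the initial configuration (empty working tape, state $q_0$, heads at the left margin), whereas here I must state the induction hypothesis in the stronger form ``$M$ accepts starting from a given polylog-encoded configuration'' and correspondingly replace the initial-configuration clauses by clauses asserting agreement with the free relations encoding $c^\ast$. One must verify that a configuration of a polylog-time machine (state, working-tape and index-tape contents, head positions, and time-stamp) is genuinely of polylog size and hence encodable by $\mathrm{SO}^{\mathit{plog}}$ relations, and that the handover condition between consecutive phases is expressible without introducing further alternations. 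Granting this parametric form, the induction closes, and taking the union over all $m$ yields the corollary $\mathrm{PLH} = \mathrm{SO}^{\mathit{plog}}$ at once.
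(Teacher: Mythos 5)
Your proposal follows essentially the same route as the paper's (sketched) proof: induction on $m$ grounded in Theorems~\ref{pedsoplog} and~\ref{pedsoplog2}, with the leading quantifier block simulated by the initial existential (resp.\ universal) phase of the alternating machine, and conversely the computation split at the alternation points so that the submachine beyond the first switch is handled by the induction hypothesis as a $\Pi^{\mathit{plog}}_{m-1}$ (resp.\ $\Sigma^{\mathit{plog}}_{m-1}$) formula. Your explicit strengthening of the induction hypothesis to be parametric in a polylog-size starting configuration $c^\ast$ (and, dually, in the free second-order variables of the inner subformula) is precisely the detail the paper's two-paragraph sketch leaves implicit, and you correctly observe that it is encodable and closes the induction.
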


\begin{proof} (sketch)
We proceed by induction, where the grounding cases for $m=1$ are given by Theorems \ref{pedsoplog} and \ref{pedsoplog2}. For the inclusions $\Sigma^{\mathit{plog}}_m \subseteq \tilde{\Sigma}^{\mathit{plog}}_m$ and $\Pi^{\mathit{plog}}_m \subseteq \tilde{\Pi}^{\mathit{plog}}_m$ we have to guess (or take all) values for the second-order variables in the leading block of existential (or universal, respectively) quantifiers, which is done with existential (or universal, respectively) states. For the checking of the subformula in $\Pi^{\mathit{plog}}_{m-1}$ (or in $\Sigma^{\mathit{plog}}_{m-1}$, respectively) we have to switch to a universal (existential) state and apply the induction hypothesis for $m-1$.

Conversely, we consider the computation tree of the given alternating Turing machine \textbf{M} and construct a formulae as in the proofs of Theorems \ref{pedsoplog} and \ref{pedsoplog2} exploiting that for each switch of state from existential to universal (or the other way round) the corresponding submachine can by induction be characterised by a formula in $\Pi^{\mathit{plog}}_{m-1}$ or $\Sigma^{\mathit{plog}}_{m-1}$, respectively.
\end{proof}

The following corollary is a straightforward consequence of Theorem~\ref{pedsoplog3}.
\begin{corollary}

Over ordered structures with successor relation, $\mathrm{BIT}$ and constants for $\log n$, the minimum, second and maximum elements, $\mathrm{SO}^{\mathit{plog}}$ captures the polylog-time hierarchy PLH.

\end{corollary}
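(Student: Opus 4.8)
The plan is to argue by induction on $m$, taking the two grounding cases that $\Sigma^{\mathit{plog}}_1$ captures $\tilde{\Sigma}^{\mathit{plog}}_1$ and that $\Pi^{\mathit{plog}}_1$ captures $\tilde{\Pi}^{\mathit{plog}}_1$ directly from Theorems~\ref{pedsoplog} and~\ref{pedsoplog2}. Since $\Pi^{\mathit{plog}}_m$ and $\tilde{\Pi}^{\mathit{plog}}_m$ arise from $\Sigma^{\mathit{plog}}_m$ and $\tilde{\Sigma}^{\mathit{plog}}_m$ by swapping the roles of $\exists/\forall$ and of existential/universal starting states throughout, every construction below has a verbatim dual; I would therefore present only the claim that $\Sigma^{\mathit{plog}}_m$ captures $\tilde{\Sigma}^{\mathit{plog}}_m$ and note that the $\Pi$-case follows by dualisation. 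Each inductive step splits into a soundness inclusion $\Sigma^{\mathit{plog}}_m \subseteq \tilde{\Sigma}^{\mathit{plog}}_m$ and a completeness inclusion $\tilde{\Sigma}^{\mathit{plog}}_m \subseteq \Sigma^{\mathit{plog}}_m$.

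For soundness I would take a sentence $\varphi = \exists \bar{X}_1 \cdots Q\bar{X}_m\,\psi$ in $\Sigma^{\mathit{plog}}_m$ and rewrite it as $\exists \bar{X}_1\,\vartheta$ with $\vartheta \in \Pi^{\mathit{plog}}_{m-1}$. The alternating machine starts in an existential state and guesses values for the leading block $\bar{X}_1$ exactly as in Theorem~\ref{pedsoplog}[Part a]: each variable of exponent $k_i$ contributes $O(\lceil\log n\rceil^{k_i+1})$ guessed bits, so the whole block is produced within a single existential phase of poly-logarithmic length. It then switches once to a universal state and runs, by the induction hypothesis applied to $\vartheta$, an $\mathrm{ATIME}^{op}[\log^{k'} n, m-1]$ computation that verifies $\vartheta$ under the guessed valuation. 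Prepending one existential phase to a computation that starts universal and uses at most $m-1$ alternations yields a computation that starts existential and uses at most $m$ alternations, with running time still in $\bigcup_k \mathrm{ATIME}[\log^k n, m]$; hence $\varphi \in \tilde{\Sigma}^{\mathit{plog}}_m$.

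For completeness I would take $M \in \mathrm{ATIME}[\log^k n, m]$ starting in an existential state and decompose its computation at the first instant it enters a universal state. Reusing the encoding of Theorem~\ref{pedsoplog}[Part b], I would existentially quantify the second-order relations $T_i, H, S_i, C, L_i$ (and the auxiliary arithmetic relations) that describe the computation throughout its \emph{initial existential phase}; the \emph{switch configuration} --- tape contents, head position and state at the first universal time --- is then definable from these relations, the ``first universal time'' being pinned down by a restricted universal quantification $\forall\bar{t}\,(I(\bar{t})\to\cdots)$ over the poly-logarithmically many time tuples. From the switch configuration onward $M$ behaves as a machine that starts universal with at most $m-1$ alternations, so by the induction hypothesis there is a $\Pi^{\mathit{plog}}_{m-1}$ formula expressing acceptance from that configuration. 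Conjoining the existentially quantified description of the initial phase with this $\Pi^{\mathit{plog}}_{m-1}$ formula and prefixing the leading existential block produces a $\Sigma^{\mathit{plog}}_m$ sentence equivalent to acceptance by $M$.

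The main obstacle I anticipate is the hand-off at the switch configuration: the capturing results of Theorems~\ref{pedsoplog} and~\ref{pedsoplog2} are stated for the \emph{fixed} initial configuration (empty work tape, state $q_0$, heads at $\bar{0}$), whereas here the $\Pi^{\mathit{plog}}_{m-1}$ sub-machine must be launched from an \emph{arbitrary} configuration supplied by the preceding existential phase. I would therefore strengthen the induction hypothesis to a parametrised (relativised) form, in which the initialisation clauses of the Part~b construction are replaced by clauses equating the sub-computation's time-$\bar{0}$ relations with the free second-order relations encoding the switch configuration; one must check that this relativisation preserves both the quantifier-prefix shape $\Pi^{\mathit{plog}}_{m-1}$ and the poly-logarithmic time bound, and that the switch point stays uniformly definable without incurring an unrestricted universal first-order quantifier. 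Once this parametrised composition is in place the alternation counts and time bounds add up exactly as in the soundness direction, and the dual argument delivers the capturing of $\tilde{\Pi}^{\mathit{plog}}_m$ by $\Pi^{\mathit{plog}}_m$, completing the induction.
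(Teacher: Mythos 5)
Your inductive argument is, in substance, exactly how the paper proceeds: the corollary is derived there as an immediate consequence of Theorem~\ref{pedsoplog3}, whose proof is precisely your induction --- grounding cases from Theorems~\ref{pedsoplog} and~\ref{pedsoplog2}, soundness by guessing the leading second-order block in an existential phase and handing off to the induction hypothesis after one alternation, completeness by decomposing the machine's computation at the first change of state mode and characterising the residual submachine by a $\Pi^{\mathit{plog}}_{m-1}$ (resp.\ $\Sigma^{\mathit{plog}}_{m-1}$) formula. In fact you go beyond the paper's sketch on one point: the hand-off at the switch configuration. The paper silently assumes that the Part~b construction can be launched from an arbitrary configuration, whereas you correctly observe that Theorems~\ref{pedsoplog} and~\ref{pedsoplog2} are stated for the fixed initial configuration, and your proposed fix --- a parametrised induction hypothesis in which the initialisation clauses ($T_2(\bar{0},\bar{p})$, $H(\bar{0},\bar{0})$, $S_0(\bar{0})$) are replaced by clauses tying the time-$\bar{0}$ relations to free second-order variables encoding the switch configuration --- is the right repair and does preserve both the prefix shape and the restricted form of universal first-order quantification, since the switch time ranges over the poly-logarithmically many tuples of $I$.

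There is, however, one genuine (if easily repaired) gap: what you prove is the level-wise correspondence between the prefix classes $\Sigma^{\mathit{plog}}_m$, $\Pi^{\mathit{plog}}_m$ and $\tilde{\Sigma}^{\mathit{plog}}_m$, $\tilde{\Pi}^{\mathit{plog}}_m$, but the corollary is a statement about the \emph{full} logic $\mathrm{SO}^{\mathit{plog}}$, whose sentences need not be in quantifier prefix normal form --- second-order quantifiers may be nested arbitrarily inside Boolean connectives and restricted universal quantifiers. To close this you must invoke Lemma~\ref{lem-snf}, which shows every $\mathrm{SO}^{\mathit{plog}}$-formula is equivalent to one in QNF and hence lies in some $\Sigma^{\mathit{plog}}_m$ or $\Pi^{\mathit{plog}}_m$; this step is not free, as it relies on the specific equivalences (such as $\exists x \forall Y^{\log^k} \psi \equiv \exists X^{\log^0} \forall Y^{\log^k} \exists x (X^{\log^0}(x) \wedge \psi)$) that trade first-order for second-order quantifiers while respecting the syntactic restrictions of the logic. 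A second, smaller omission: since the paper defines $\mathrm{PLH} = \bigcup_{m \ge 1} \tilde{\Sigma}^{\mathit{plog}}_m$, you should also note the standard inclusion $\tilde{\Pi}^{\mathit{plog}}_m \subseteq \tilde{\Sigma}^{\mathit{plog}}_{m+1}$ (prepend a vacuous existential phase), so that sentences whose QNF lands in a $\Pi$-class are also placed inside $\mathrm{PLH}$. With these two bridging remarks your argument is complete and matches the paper's.
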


\section{On Constant Depth Quasipolynomial Size Circuits and Restricted Second-Order Logics}\label{barrington}

For this section we assume the reader has a basic understanding of circuit complexity (\cite{Immerman99} is a good reference for the subject). We consider a circuit as a connected acyclic digraph with arbitrary number of input nodes and exactly one output node. As in \cite{Barr92} we define  $qAC^0$ as the class of $\mathrm{DTIME} [\log^{O(1)} n]$ $\mathrm{DCL}$ uniform families of Boolean circuits of unbounded fan-in, Size $2^{\log^{O(1)} n}$ and Depth $O(1)$.

If $\mathcal{C}$ is a family of circuits, we consider that $(h,t,g,z_n) \in \mathrm{DCL(\mathcal{C})}$ iff the gate with number $h$ is of type $t$ and the gate with number $g$ is a child of gate $h$, and $z_n$ is an arbitrary binary string of length $n$, if the type is $\vee$, $\wedge$, or $\neg$. If the type is $x$, then $h$ is an input gate that corresponds to bit $g$ of the input. $\mathrm{bin}(\mathbf{A})$, of length $n$,  is the binary encoding of the input structure $\mathbf{A}$ on which $M_{\mathcal{C}}$ will compute the query (see~\cite{Immerman99}). 

  Further, for every $m \geq 1$ we define $qAC^0_{m}$ as the subclass of $qAC^0$ of the families of circuits in $qAC^0$ where the path from  an input gate  to the output gate  with the maximum number of alternated gates of unbounded fan-in of type $\mathrm{AND}$ and $\mathrm{OR}$ in the circuits is $m$.
Following \cite{barrington:jcss1990} we assume that in all the circuits in the family the $\mathrm{NOT}$  gates can occur only at the second level from the left (i.e., immediately following input gates), the gates of unbounded fan-in at any given depth are all of the same type, the layers of such gates alternate in the two types, and the inputs to a gate in a given layer are always outputs of a gate in the previous layer.
Besides the $m$ alternated layers of only gates of unbounded fan-in, we have in each circuit and to the left of those layers a region of the circuit with only $\mathrm{AND}$ and $\mathrm{OR}$ gates of fan-in $2$, with an arbitrary layout, and to the left of that region a layer of some possible $\mathrm{NOT}$ gates and then a layer with the $n$ input gates.
It is straightforward to transform any arbitrary $qAC^0$ circuit into an equivalent one that satisfies such restrictions.
We denote as $\exists qAC^0_{m}$ ($\forall qAC^0_{m}$) the subclass of $qAC^0_{m}$ where the output gate is of type $\mathrm{OR}$ ($\mathrm{AND}$).

The logic $\mathrm{SO}^{\mathit{plog}}$ is closely related to a restricted second-order logic defined by David A. Mix Barrington in~\cite{Barr92}. The logic in~\cite{Barr92} is defined by extending first-order logic with a second-order quantifier $Q_{f}$ for each $f \in \mathcal{F}$ which range over relations on the sub-domain $\{1, \ldots, \log n\}$, where $n$ is the size of the interpreting structure. The case related to our results is when $\mathcal{F} = \{\mathrm{OR}, \mathrm{AND}\}$, which gives raise to restricted existential and universal second-order quantifiers. In the following, we denote the logic obtained when $\mathcal{F} = \{\mathrm{OR}, \mathrm{AND}\}$ as $\mathrm{SO}^b$. It turns out that $qAC^0$ coincides with the class of Boolean queries expressible in $\mathrm{SO}^{b}$ ---the result in \cite{Barr92} is actually more general, allowing any set of Boolean functions  $\mathcal{F}$ of $n^{{O}(1)}$ inputs complying with a padding property and containing the functions $\mathrm{OR}$ and $\mathrm{AND}$.

When considered as a query language, we believe that $\mathrm{SO}^{\mathit{plog}}$ is better suited than $\mathrm{SO}^b$, since it is in general more natural and less cumbersome to define queries in $\mathrm{SO}^{\mathit{plog}}$ than in $\mathrm{SO}^b$. Take for instance the $\mathrm{SO}^{\mathit{plog}}$ sentence in Example~\ref{polylogclique} which expresses a poly-logarithmically bounded version of the clique NP-complete problem. It is not possible to simply and literally express in $\mathrm{SO}^b$ (as we do in $\mathrm{SO}^{\mathit{plog}}$) that there is a set $S$ of arbitrary nodes of $G$ (where $S$ is of size $\lceil \log n\rceil^k$) such that the sub-graph induced by $S$ in $G$ is a clique. Instead in $\mathrm{SO}^b$ we would need to define a set of arbitrary binary numbers, which would need to be encoded into a relation of arity $k+2$ defined on the sub-domain $\{1, \ldots, \log n\}$, and then use $\mathrm{BIT}$ to check whether the nodes of $G$ corresponding to these binary numbers induce a sub-graph of $G$ which is a clique.

There is a well known result (\cite{Immerman99}, Theorem~5.22) which shows that the class of first-order uniform families of Boolean circuits of unbounded fan-in, size $n^{{O}(1)}$ and depth ${O}(1)$, coincides with the class of languages $\mathrm{ATIME}[\log n, {O}(1)]$ that are accepted by random-access alternating Turing machines that make at most $\log n$ steps and at most ${O}(1)$ alternations between existential and universal states. The intuitive idea is that as alternating Turing machines have bounded fan-out in their computation trees, to implement an AND (OR) gate of unbounded fan-in, a full balanced tree of depth logarithmic in the size of the circuits, of universal (existential) states is needed. Then it appears as natural that $qAC^0$ coincides with the whole poly-logarithmic time hierarchy $\mathrm{PLH}$ as defined in this paper, since $\mathrm{PLH} = \mathrm{ATIME}[(\log n)^{{O}(1)}, {O}(1)]$ and $(\log n)^{{O}(1)}$ is the logarithm of the size $2^{(\log n)^{{O}(1)}}$ of the circuits in $qAC^0$.   

Therefore the fact that $\mathrm{SO}^{\mathit{plog}}$ captures the whole class $\mathrm{PLH}$ could also be seen as a corollary of Barrington's theorem in~\cite{Barr92} (see Section~3, page~89). This however does \emph{not} applies to our main results, i.e., the capture of $\npolylog$ by the existential fragment of $\mathrm{SO}^{\mathit{plog}}$ and the one-to-one correspondence between the quantifier prefix classes of $\mathrm{SO}^{\mathit{plog}}$ and the corresponding levels of $\mathrm{PLH}$. The critical difference between Barrington's $\mathrm{SO}^b$ logic and $\mathrm{SO}^{\mathit{plog}}$ is that we impose a restriction in the first-order logic sub-formulae, so that the universal first-order quantifier is only allowed to range over sub-domains of polylog size. This is a key feature since otherwise the first-order sub-formulae of the $\Sigma^{\mathit{plog}}_m$ (and $\Pi^{\mathit{plog}}_m$) fragments of $\mathrm{SO}^{\mathit{plog}}$ would need at least linear time to be evaluated. Of course, Barrington does not need to define such constraint because he always speaks of the whole class $\mathrm{PLH}$, and we show indeed that for every first-order logic formula there is an equivalent $\mathrm{SO}^{\mathit{plog}}$ formula. 

We do not know yet whether an exact correspondence between the levels $\tilde{\Sigma}_m^{\mathit{plog}}$ of the polylog-time hierarchy PLH and ``natural'' sub-classes of families of circuits in $\mathit{qAC}^0$ can be established. So far we have proven the following two lemmas, but it is open whether their converses hold or not.

\begin{lemma}\label{a}
For all $m \ge 1$ we have that $\exists qAC^0_{m}$ $\subseteq$ $\tilde{\Sigma}^{\mathit{plog}}_m$, and $\forall qAC^0_{m}$ $\subseteq$ $\tilde{\Sigma}^{\mathit{plog}}_{m+1}$.


\end{lemma}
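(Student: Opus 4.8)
The plan is to simulate the evaluation of a $qAC^0_m$ circuit top-down by a random-access alternating Turing machine, exploiting the standard correspondence between unbounded fan-in $\mathrm{OR}$ ($\mathrm{AND}$) gates and existential (universal) states. Given a family $\mathcal{C}$ computing a language $L$, on input $\mathrm{bin}(\mathbf{A})$ the machine $M$ starts at the output gate and evaluates gates recursively: to evaluate an $\mathrm{OR}$ gate it existentially guesses the number of one of its children and recurses into it, while to evaluate an $\mathrm{AND}$ gate it universally guesses a gate number $h$, deterministically tests via the $\mathrm{DCL}$ relation whether $h$ is a child, and either accepts (if not) or recurses (if so). Since the circuits have size $2^{\log^{O(1)} n}$, gate numbers have only $\log^{O(1)} n$ bits, so guessing and writing them costs polylog time; and since $\mathcal{C}$ is $\mathrm{DTIME}[\log^{O(1)} n]$ $\mathrm{DCL}$-uniform, each navigation step (determining a gate's type and testing the child relation) is a deterministic polylog-time subroutine that $M$ runs within its current block of states.

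The crux is to traverse the $m$ alternating layers of unbounded fan-in gates using exactly $m$ alternating blocks of states, without the fan-in-$2$ region at the bottom of the circuit introducing spurious alternations. Here I would use that the overall depth is $O(1)$: the fan-in-$2$ region therefore has constant depth, so the sub-circuit feeding each of its output wires is a binary tree with $O(1)$ gates depending on only $O(1)$ input literals. Hence, once $M$ reaches a child that lies in the region (which happens only at the innermost unbounded layer), it can evaluate that wire \emph{deterministically}: it uses the $\mathrm{DCL}$ machine to identify the $O(1)$ relevant input positions, reads those bits of $\mathrm{bin}(\mathbf{A})$ by random access in $O(\log n)$ time each, and computes the value of the constant-size tree. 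This whole evaluation is deterministic and hence folds into the last block, adding no alternation.

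Counting blocks, each of the $m$ unbounded fan-in layers is processed inside a single block of states whose type (existential for $\mathrm{OR}$, universal for $\mathrm{AND}$) matches the common type of that layer, and, because consecutive unbounded layers alternate in type, descending from one layer to the next switches the block type; the leading block thus has the type of the output gate. For $\mathcal{C} \in \exists qAC^0_m$ the output is an $\mathrm{OR}$ gate, so $M$ starts in an existential state and uses $m$ alternating blocks, placing $L$ in $\mathrm{ATIME}[\log^{O(1)} n, m] \subseteq \tilde{\Sigma}^{\mathit{plog}}_m$; the running time is $O(1)$ blocks times $\log^{O(1)} n$ per block, i.e. polylog, so the simulation is sound. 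For $\mathcal{C} \in \forall qAC^0_m$ the output is an $\mathrm{AND}$ gate, so the same construction yields a machine that starts in a universal state with $m$ blocks, witnessing $L \in \tilde{\Pi}^{\mathit{plog}}_m$; prepending a single trivial existential block (one existential state with a unique successor leading to the original universal start) gives $\tilde{\Pi}^{\mathit{plog}}_m \subseteq \tilde{\Sigma}^{\mathit{plog}}_{m+1}$, which is exactly the second claim.

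The main obstacle I expect is precisely the fan-in-$2$ region: a naive recursive evaluation of its $\mathrm{AND}/\mathrm{OR}$ gates via alternation would inflate the number of alternating blocks by the region's depth, breaking the tight $m$ (respectively $m{+}1$) bound. Resolving it rests entirely on the $O(1)$-depth bound, which guarantees that each region output depends on only constantly many inputs and can therefore be read off and evaluated deterministically in polylog time; verifying this reduction carefully, including the polylog cost of the $\mathrm{DCL}$ navigation used to locate those inputs, is the technical heart of the argument.
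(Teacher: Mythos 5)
Your overall architecture is essentially the paper's: a top-down simulation of the circuit, one block of existential (for $\mathrm{OR}$) or universal (for $\mathrm{AND}$) states per unbounded fan-in layer, guess-and-test navigation via the $\mathrm{DCL}$ relation, and the fan-in-$2$ region folded into the last block using the $O(1)$ depth bound, with the $\forall qAC^0_m$ case costing one extra existential block. However, the step you yourself flag as the technical heart contains a genuine gap: you claim the machine can ``use the $\mathrm{DCL}$ machine to identify the $O(1)$ relevant input positions'' \emph{deterministically}. $\mathrm{DCL}$-uniformity only supplies a polylog-time \emph{decision} procedure for tuples $(h,t,g,z_n)$; it does not let you \emph{find} the children of a given gate. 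Locating a child of $h$ deterministically would require enumerating candidate gate numbers, of which there are up to $2^{\log^{O(1)} n}$ (and even for an input gate, recovering the bit position $g$ it reads means enumerating up to $n$ candidates), so no deterministic polylog-time search exists under the stated hypotheses. Nothing in the lemma assumes a self-describing gate numbering that encodes children; that device appears only later, for the circuits \emph{constructed} in the proof of Lemma~\ref{b}, and cannot be invoked for an arbitrary family in $\exists qAC^0_m$.

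The paper closes exactly this hole differently: since the fan-in-$2$ region has constant depth, say $w$, the machine \emph{guesses} the entire relevant sub-layout (the $w$ gate numbers of the cone feeding the boundary wire) inside the final alternating block, verifies every guessed edge by $\mathrm{DCL}$ membership tests, and only then evaluates the resulting constant-size circuit deterministically, reading each needed input bit by random access in $O(\log n)$ time. Crucially, wrong guesses are absorbed without introducing a new alternation by exploiting the type of the block itself: if the last block is universal the machine \emph{accepts} on a malformed layout, and if it is existential it \emph{rejects}, so only the unique correct layout carries the decision. Substituting this guess-and-verify step for your deterministic ``identification'' repairs your argument; the rest of your proposal (per-block polylog cost, the count of $m$ blocks for $\exists qAC^0_m$, and the $m{+}1$ bound for $\forall qAC^0_m$ --- the paper likewise spends its initial existential block guessing the output gate before the universal simulation begins) agrees with the paper's proof.
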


\begin{proof}
Let $\mathcal{C}$ be a family of circuits in $\exists qAC^0_{m}$ with the uniformity conditions given above. We build an alternating Turing machine $M_{\mathcal{C}}$ that computes the query computed by $\mathcal{C}$. We have a deterministic Turing machine that decides the $\mathrm{DCL}$ of $\mathcal{C}$ in time $\log^{c} n$, for some constant $c$. Then  for any given pair of gate numbers $g, h$, gate type $t$, and arbitrary string of $n$ bits $z_{n}$, we can deterministically check both $(h,t,g,z_n) \in \mathrm{DCL(\mathcal{C})}$ and $(h,t,g,z_n) \not\in \mathrm{DCL(\mathcal{C})}$. To be able to have a string $z_n$ of size $n$, we allow the input tape of $M_{\mathcal{C}}$ to be read/write, so that to compute those queries we write   $h,t,g,$ to the left of the input $\mathrm{bin}(\mathbf{A})$ in the input tape of the machine. Note that each gate number is $O(\log^{c'} n)$ bits long for some constant $c'$, and the type $t$ is $2$ bits long (which encodes the type in $\{\mathrm{AND},$ $\mathrm{OR},$ $\mathrm{NOT},$ $x\}$, where $x$ indicates that the gate is an input gate).

First $M_{\mathcal{C}}$ computes the size $n$ of the input $\mathrm{bin}(\mathbf{A})$, which can done in logarithmic time (see \cite{barrington:jcss1990}).

Corresponding to the $m$ alternated layers of  gates of unbounded fan-in of type $\mathrm{AND}$ and $\mathrm{OR}$, where the last layer consists of one single $\mathrm{OR}$ gate which is the output gate, we will have in $M_{\mathcal{C}}$ $m$ alternated blocks of existential and universal states, which will be executed in the opposite direction to the edge relation in the circuit, so that the first block, which is existential will correspond to layer $m$ in $\mathcal{C}$. Each such block  takes time $O(\log^{c'} n)$. In the following we will consider the layers from right to left.
\begin{itemize}
 \item
In the first block, which is \textbf{existential}, $M_{\mathcal{C}}$ guesses gate numbers $g_{o}, h_{2}$, and checks whether $(h_{2}, \wedge, g_{o}, z_n)$ $\in \mathrm{DCL(\mathcal{C})}$. If it is true, then $M_{\mathcal{C}}$ writes in a work tape the sequence $\langle g_{o}, h_{2}\rangle$.
Note that $h_{2}$ is of type $\wedge$. Then it checks
whether $(h_{2}, \wedge, g_{o},  z_n)$ $\not\in \mathrm{DCL(\mathcal{C})}$. If it is true then $M_{\mathcal{C}}$ rejects.

 \item
The second block is \textbf{universal}, and has two stages. In the first stage, $M_{\mathcal{C}}$ checks whether $g_{o}$ is the output gate. To that end it guesses a gate number $u$, and checks whether $(g_{o}, \vee, u,  z_n)$ $\in \mathrm{DCL(\mathcal{C})}$. If it is true, then $M_{\mathcal{C}}$ rejects.

In the second stage, $M_{\mathcal{C}}$ checks the inputs to gate $h_{2}$. It guesses a gate number $h_{3}$, and checks whether $(h_{3}, \vee,  h_{2}, z_n)$ $\in \mathrm{DCL(\mathcal{C})}$, in which case it adds $h_{3}$ at the right end of the sequence in the work tape. It then checks whether $(h_{3}, \vee, h_{2},  z_n)$ $\not\in \mathrm{DCL(\mathcal{C})}$, in which case it accepts.

 \item
The third block is \textbf{existential}. $M_{\mathcal{C}}$ checks the inputs to gate $h_{3}$. It guesses a gate number $h_{4}$, and checks whether
$(h_{4}, \wedge, h_{3}, z_n)$ $\in \mathrm{DCL(\mathcal{C})}$, in which case it adds $h_{4}$ at the right end of the sequence in the work tape. It then checks whether
$(h_{4}, \wedge, h_{3}, z_n)$ $\not\in \mathrm{DCL(\mathcal{C})}$, in which case it rejects.

 \item
Following the same alternating pattern, the $(m-1)$-th block will be existential or universal depending on the type of the gates in $\mathcal{C}$ at the $(m- 1)$-th layer.
Note that, in our progression from the output gate towards the input gates (right to left), the parents of $h_{m}$ (which is the gate number guessed at the $(m-1)$-th block of $M_{\mathcal{C}}$) are the first gates in the region of $\mathcal{C}$ of the arbitrary layout of only gates with bounded fan-in. As the depth of each circuit in the family $\mathcal{C}$ is constant, say it is $w$ for all the circuits in the family, in the $m$-th block $M_{\mathcal{C}}$ can guess the \textit{whole sub-circuit} of that region. Then it guesses $w$ gate  numbers and checks that they form exactly the layout of that region of the circuit. Once  $M_{\mathcal{C}}$ has guessed that layout it can work deterministically to evaluate it, up to the input gates, which takes time $O(1)$.

If the $(m-1)$-th block is \textbf{universal}, then if the guessed $w$ gate numbers do not form the correct layout, $M_{\mathcal{C}}$ accepts.
If the $(m-1)$-th block is \textbf{existential}, and the guessed $w$ gate numbers do not form the correct layout, then $M_{\mathcal{C}}$ rejects.


\end{itemize}


Note that if the family $\mathcal{C}$ is in $\forall qAC^0_{m}$, the first block of states is still existential, to guess the output gate. Then it works as in the  case of $\exists qAC^0_{m}$. 
\end{proof}

\begin{lemma}\label{b}
Let $t,k \ge 1$ and $\psi \in {\Sigma}^{1,\mathit{plog}}_t$ with first-order sub-formula $\varphi \in {\Sigma}^{0}_k$, and whose vocabulary includes the $\mathrm{BIT}$ predicate.
Then there is a family $\mathcal{C}_{\psi}$ of Boolean circuits in $\exists qAC^0_{t+k}$ that computes the Boolean query expressed by $\psi$.

\end{lemma}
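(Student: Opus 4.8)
The plan is to translate $\psi$ directly into a circuit family $\mathcal{C}_\psi$ by turning each quantifier into an unbounded-fan-in gate, reading the quantifier prefix from the outside in. The second-order prefix of $\psi$ consists of $t$ alternating blocks beginning with $\exists$; I would render each maximal block of existential second-order quantifiers as a single $\mathrm{OR}$ gate and each maximal block of universal ones as a single $\mathrm{AND}$ gate, the children of such a gate being the subcircuits obtained by instantiating the block's variables with every admissible witness. Since a witness for a variable of arity $r$ and exponent $k'$ is a relation $R \subseteq A^r$ with $|R| \le \lceil \log n \rceil^{k'}$, there are at most $2^{O(\log^{k'+1} n)}$ of them, so each such gate has quasipolynomial fan-in. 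Below the $t$ second-order layers I would place $k$ further alternating layers for the first-order prefix of $\varphi \in \Sigma^0_k$: an existential block $\exists \bar{x}$ becomes an $\mathrm{OR}$ of polynomial fan-in (one child per tuple of domain elements), while a restricted universal $\forall \bar x(X(\bar x) \to \cdots)$ becomes an $\mathrm{AND}$ whose children range only over the tuples of the already-chosen relation assigned to $X$, giving poly-logarithmic fan-in. Finally, the quantifier-free matrix of $\varphi$ is a fixed Boolean combination independent of $n$, which I would realise by the bounded-fan-in region permitted in $qAC^0$ circuits, with its atomic leaves wired to input gates (for atoms over the structure's relations $R_i$) or hardwired to constants (for atoms over the guessed second-order relations and for numerical atoms involving $\mathrm{BIT}$, $\mathrm{SUCC}$, $\leq$ and the constants).

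Correctness then follows by a routine induction along the prefix: a gate reached by a sequence of witness choices outputs $1$ iff the instantiation of the already-quantified variables according to those choices satisfies the remaining formula, and the restricted first-order universal quantifier is modelled faithfully precisely because its $\mathrm{AND}$ ranges over exactly the tuples present in the chosen relation. For the resource bounds, each layer multiplies the gate count by at most the number of witnesses of the quantifiers it resolves, i.e. by a quasipolynomial factor, and there are only $t + k + O(1)$ layers, so the total size stays $2^{\log^{O(1)} n}$; the depth consists of at most $t + k$ alternating unbounded-fan-in layers followed by $O(1)$ bounded-fan-in depth, and the output gate is $\mathrm{OR}$ because $\psi$ begins with an existential quantifier. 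Consecutive layers of the same type at the second-order/first-order boundary collapse, which can only lower the alternation count; if the exact value $t+k$ is required one simply pads with single-child gates. Hence $\mathcal{C}_\psi \in \exists qAC^0_{t+k}$.

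The crux of the proof, and the step I expect to demand the most care, is the $\mathrm{DCL}$-uniformity. I would fix a gate-addressing scheme in which the address of a gate records its layer index together with the sequence of witnesses (relations or tuples of domain elements) chosen by the quantifiers already resolved above it; since every witness is describable in $\log^{O(1)} n$ bits and there are $t+k+O(1)$ layers, addresses have poly-logarithmic length, matching the quasipolynomial number of gates. I would then exhibit a $\mathrm{DTIME}[\log^{O(1)} n]$ machine that, given a query $(h,t,g,z_n)$, parses the addresses $h$ and $g$, reads off the layer of $h$ to determine its type, and verifies the parent--child relation. The delicate cases are the restricted universal first-order gates, where deciding whether $g$ is a child requires testing membership of a tuple in the relation stored in the address (a poly-logarithmic lookup), and the matrix region, where resolving an atomic gate requires evaluating a numerical predicate or a guessed second-order atom, or identifying the bit of $\mathrm{bin}(\mathbf{A})$ that a structure-atom reads --- all computable from the address in poly-logarithmic time using $\mathrm{BIT}$. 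Establishing that every branch of this case analysis runs within the poly-logarithmic time bound, and that the chosen addressing is simultaneously compact and efficiently decodable, is the main obstacle; the remaining verifications are bookkeeping.
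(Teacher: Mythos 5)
Your proposal is correct and follows essentially the same route as the paper: both build the canonical layered circuit mirroring the quantifier prefix (one unbounded fan-in $\mathrm{OR}$/$\mathrm{AND}$ layer per second-order and first-order quantifier block, with the quantifier-free matrix realised in the bounded fan-in region), and both obtain $\mathrm{DCL}$-uniformity by encoding into each gate number the values of all quantified first- and second-order witnesses along its path, decodable in $\mathrm{DTIME}[\log^{O(1)} n]$. The only differences are cosmetic --- you merge same-type quantifier blocks into single gates and explicitly treat the collapse/padding at the second-order/first-order boundary, and you let the restricted universal gates range only over the tuples of the chosen relation, whereas the paper extends the canonical $AC^0$ construction of Barrington--Immerman--Straubing with one layer per second-order quantifier.
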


\begin{proof}

We essentially follow the sketch of the proof of the theorem in Section 3, page 89 of \cite{Barr92}, where it shows that for a given $\psi \in $ $\mathrm{SO}^{\mathit{b}}$ there is an equivalent circuit family $\mathcal{C}_{\psi}$ in $qAC^0$. But we use a simple strategy to define a layout of the circuits which will preserve the number of alternated blocks of quantifiers in $\psi$ and $\varphi$.

 Starting from the canonical $AC^0$ circuit corresponding to a first-order formula in quantifier prefix normal form, as in Theorem 9.1 in \cite{barrington:jcss1990}, we follow the same idea extending such circuit $\mathcal{C}_{\varphi}$ for $\varphi \in {\Sigma}^{0}_k$, to a canonical circuit $\mathcal{C}_{\psi}$ in $\exists qAC^0$
for $\psi \in {\Sigma}^{1,\mathit{plog}}_t$, in such a way that $\mathcal{C}_{\psi}$ is in $\exists qAC^0_{t+k}$.

The layout of $\mathcal{C}_{\varphi}$ basically follows from left to right the opposite order of the formula $\varphi$, so that the output gate is an unbounded fan-in $\vee$ gate that corresponds to the first-order quantifier $\exists_1$, the inputs to that gate are the outputs of a layer of unbounded fan-in $\wedge$ gates that correspond to the first-order quantifier $\forall_2$, and so on. To the left of the leftmost layer of unbounded fan-in gates corresponding to the quantifier $Q_k$, there is a constant size, constant depth region of the circuit which corresponds to the quantifier free sub-formula of $\varphi$. This part has the input gates, constants, $\mathrm{NOT}$ gates, and  $\mathrm{AND}$ and $\mathrm{OR}$ gates of fan-in $2$.

In a similar way, we extend $\mathcal{C}_{\varphi}$ to the right, to get $\mathcal{C}_{\psi}$. To that end, to the right of the first-order quantifier $\exists_1$ (which in $\mathcal{C}_{\psi}$ becomes a \textit{layer} of unbounded fan-in $\vee$ gates), we will have in $\mathcal{C}_{\psi}$ one layer of  gates of unbounded fan-in for each $SO$ quantifier: of $\vee$ gates for existential $SO$ quantifiers, and $\wedge$ gates for universal $SO$ quantifiers. The first added layer corresponds to the $SO$ quantifier $Q_t$, and the rightmost layer will correspond the $SO$ quantifier $\exists_1$, which is a layer of one single gate, that becomes the new output gate.

Clearly, by following the construction above we get a family of circuits in $\exists qAC^0_{t+k}$.

To build the Turing machine $M_{\mathcal{C}}$ $\in\mathrm{DTIME} [\log^{O(1)} n]\;$ that decides the language $\mathrm{DCL}(\mathcal{C})$, we use the same kind of encoding sketched in \cite{Barr92} (which in turn is an extension of the one used in \cite{barrington:jcss1990} for $AC^{0}$).
In the number of the gates we encode all the information that we need to decide the language, while still keeping its length polylogarithmic.
Each such gate number will have different sections: i) type of the gate; ii) a sequence of $k$ fields of polylogarithmic size each to hold the values of the first-order variables; iii) a sequence of $t$ fields of polylogarithmic size each to hold the values of of the $SO$ variables; iv) a constant size field for the code of the $\mathrm{NOT}$, $\mathrm{OR}$ and $\mathrm{AND}$ gates of fan-in $1, 2, 2$, respectively, in the region of $\mathcal{C}_{\psi}$ corresponding to the quantifier-free part of the formula $\varphi$; v) a logarithmic size field for the bit number that corresponds to an input gate, vi) a polylogarithmic size field for the number of the gate whose output is the left (or only) input to the gate; and vii) idem for the right input.

The idea is that for any given gate, its number will hold the values of all the first-order and $SO$ variables that are bounded in ${\psi}$, in the position of the formula that corresponds, by the construction above, to that gate, or zeroes if the variable is free. Note that each sequence of all those values uniquely define a path in $\mathcal{C}_{\psi}$ from the output gate to the given gate. For the gates which correspond to quantifiers, and for those in the quantifier free part that are parents of them, the encoding allows to easily compute the number of their child gates.
The gates for the quantifier free part hold in their numbers (iv) a number which uniquely identifies that gate in that region of the circuit for a particular branch in $\mathcal{C}_{\psi}$ which is given by the values of the bounded variables. Note that the layout of each such branch of the circuit is constant and hence \textit{stored} in the transition function of $M_{\mathcal{C}}$, and it can evaluate that sub circuit in polylogarithmic time.
Note that the predicate $\mathrm{BIT}(i,j)$ can be evaluated by $M_{\mathcal{C}}$ by counting in binary in a work tape up to $j$ and then looking at its bit $i$.

In this way, clearly $M_{\mathcal{C}}$ decides $\mathrm{DCL}(\mathcal{C})$ in time $\log^{O(1)} n$. 
\end{proof}

\section{Conclusions}\label{sec:schluss}
We investigated $\mathrm{SO}^{\mathit{plog}}$, a restriction of second-order logic, where second-order quantification ranges over relations of poly-logarithmic size and first-order quantification is restricted to the existential fragment of first-order logic plus universal quantification over variables in the scope of a second-order variable. In this logic we defined the poly-logarithmic hierarchy PLH using fragments $\Sigma^{\mathit{plog}}_m$ (and $\Pi^{\mathit{plog}}_m$) defined by formulae with alternating blocks of existential and universal second-order quantifiers in quantifier prefix normal form. We show that the existential fragment $\Sigma^{\mathit{plog}}_1$ captures $\npolylog$, i.e. the class of Boolean queries that can be accepted by a non-deterministic Turing machine with random access to the input in time $O(\log^k n)$ for some $k \ge 0$. In general, $\Sigma^{\mathit{plog}}_m$ captures the class of Boolean queries that can be accepted by an alternating Turing machine with random access to the input in time $O(\log^k n)$ for some $k \ge 0$ with at most $m$ alternations between existential and universal states. Thus, PLH is captured by $\mathrm{SO}^{\mathit{plog}}$. 

For the proofs the restriction of first-order quantification is essential, but it implies that we do not have closure under negation. As a consequence we do not have a characterisation of the classes $\text{co-}\Sigma^{\mathit{plog}}_m$ and $\text{co-}\Pi^{\mathit{plog}}_m$. These consitute open problems. Furthermore, PLH resides in the complexity class PolyLogSpace, which is known to be different from $\mathrm{P}$, but it is conjectured that PolyLogSpace and $\mathrm{P}$ are incomparable. Whether the inclusion of PLH in PolyLogSpace is strict is another open problem.

The theory developed in this article and its proofs make intensive use of alternating Turing machines with random access to the input. We observe that it appears awkward to talk about poly-logarithmic time complexity, when actually an unbounded number of computation branches have to be exploited in parallel. It appears more natural to refer directly to a computation model that involves directly unbounded parallelism such as Abstract State Machines that have already been explored in connection with the investigation of choiceless polynomial time \cite{blass:apal1999}. We also observe that a lot of the technical difficulties in the proofs result from the binary encodings that are required in order to make logical structures accessible for Turing machines. The question is, whether a different, more abstract treatment would help to simplify the technically complicated proofs. These more general questions provide further invitations for future research.

\bibliographystyle{plain}

\bibliography{SOpolylog}

\end{document}